\newtheorem{conj}{Conjecture}
\newcommand{\ie}{i.\,e.\xspace}
\newcommand{\ILP}{\textsc{Integer Linear Programming}}
\newcommand{\nfold}{$n$-fold}
\newcommand{\TwoStage}{$2$-stage stochastic}
\newcommand{\QC}{\textsc{Quadratic Congruences}}
\newcommand{\Rest}{\textsc{Non-Unique Remainder}}
\newcommand{\enc}{\text{enc}}
\newcommand{\QCInd}{\text{QC}}
\newcommand{\MRD}{\text{NR}}
\newcommand{\SatInd}{\text{$3$}}
\newcommand{\TwoStageProb}{\textsc{$2$-stage ILP}}
\begin{document}
\title{The Double Exponential Runtime is Tight for 2-Stage Stochastic ILPs\thanks{This work was supported by the DFG projects JA 612/20-1 and KL 3408/1-1}}

\authorrunning{K. Jansen et al.}

\author{Klaus Jansen\inst{1} \and
Kim-Manuel Klein\inst{2} \and
Alexandra Lassota \inst{3}}

\institute{Department of Computer Science, Kiel University,  Kiel, Germany
\email{kj@informatik.uni-kiel.de}\\ \and
Department of Computer Science, Kiel University,  Kiel, Germany\\
\email{kmk@informatik.uni-kiel.de}\\ \and
Department of Computer Science, Kiel University,  Kiel, Germany\\
\email{ala@informatik.uni-kiel.de}}

\maketitle
\begin{abstract}
We consider fundamental algorithmic number theoretic problems and their relation to a class of block structured Integer Linear Programs (ILPs) called $2$-stage stochastic. A $2$-stage stochastic ILP is an integer program of the form $\min \{c^T x \mid \mathcal{A} x = b, \ell \leq x \leq u, x \in \mathbb{Z}^{r + ns} \}$ where the constraint matrix $\mathcal{A} \in \mathbb{Z}^{nt \times r +ns}$ consists of $n$ matrices $A_i \in \mathbb{Z}^{t \times r}$ on the vertical line and $n$ matrices $B_i \in \mathbb{Z}^{t \times s}$ on the diagonal line aside. 

First, we show a stronger hardness result for a number theoretic problem called \QC{} where the objective is to compute a number $z \leq \gamma$ satisfying $z^2 \equiv \alpha \bmod \beta$ for given $\alpha, \beta, \gamma \in \mathbb{Z}$. This problem was proven to be NP-hard already in 1978 by Manders and Adleman. However, this hardness only applies for instances where the prime factorization of $\beta$ admits large multiplicities of each prime number. We circumvent this necessity proving that the problem remains NP-hard, even if each prime number only occurs constantly often.

Then, using this new hardness result for the $\QC$ problem, we prove a lower bound of $2^{2^{\delta(s+t)}} |I|^{O(1)}$ for some $\delta > 0$ for the running time of any algorithm solving $2$-stage stochastic ILPs assuming the  Exponential Time Hypothesis (ETH). Here, $|I|$ is the encoding length of the instance. This result even holds if $r$, $||b||_{\infty}$, $||c||_{\infty}, ||\ell||_{\infty}$ and the largest absolute value $\Delta$ in the constraint matrix~$\mathcal{A}$ are constant. This shows that the state-of-the-art algorithms are nearly tight. Further, it proves the suspicion that these ILPs are indeed harder to solve than the closely related \nfold{} ILPs where the contraint matrix is the transpose of~$\mathcal A$.

\keywords{$2$-Stage Stochastic ILPs, Quadratic Congruences, Lower Bound, Exponential Time Hypothesis}
\end{abstract}

\section{Introduction}
One of the most fundamental problems in algorithm theory and optimization is the \ILP{} problem. Many theoretical and practical problems can be modeled as integer linear programs (ILPs) and thus they serve as a very general but powerful framework for tackling various questions. Formally, the \ILP{} problem is defined as
\begin{align*}
    \min\{c^\top x \,|\, \mathcal{A}x = b, \ell \leq x \leq u, x \in \mathbb{Z}^{d_2}\}
\end{align*}
for some matrix $\mathcal{A} \in \mathbb{Z}^{d_1 \times d_2}$, a right-hand side $b \in \mathbb{Z}^{d_1}$, an objective function~$c \in \mathbb{Z}^{d_2}$ and some lower and upper bounds $\ell, u \in \mathbb{Z}^{d_2}$. The goal is to find a solution~$x$ such that the value of the objective function $c^\top x$ is minimized. In general, this problem is NP-hard. Thus, it is of great interest to find structures to these ILPs which make them solvable more efficiently. In this work, we consider \TwoStage{} integer linear programs where the constraint matrix admits a specific block structure. Namely, the constraint matrix $\mathcal{A}$ only contains non-zero entries in the first few columns and block-wise along the the diagonal aside. This yields the following form:
\begin{equation*}
\mathcal A =
\begin{pmatrix}
A_1	    & B_1   & 0     & \dots	    & 0      \\
A_2	    & 0 	& B_2   & \ddots    & \vdots \\
\vdots  & \vdots & \ddots & \ddots    & 0      \\
A_n 	& 0     & \dots & 0     	& B_n
\end{pmatrix}.
\end{equation*}
Thereby $A_1, \dots, A_n \in \mathbb{Z}^{t \times r}$ and $B_1, \dots, B_n \in \mathbb{Z}^{t \times s}$ are integer matrices themselves. The complete constraint matrix $\mathcal{A}$ has size $nt \times r +ns$. Let $\Delta$ denote the largest absolute entry in $\mathcal{A}$.

Such \TwoStage{} ILPs are a common tool in stochastic programming and they are often used in practice to model uncertainty of decision making over time~\cite{DBLP:journals/eor/Albareda-SambolaVF06,DBLP:journals/mor/DempsterFJLLK83,kall1994stochastic,DBLP:journals/ior/LaporteLM94}. Due to the applicability a lot of research has been done in order to solve these (mixed) ILPs efficiently in practice. Since we focus on the theoretical aspects of \TwoStage{} ILPs in this chapter, we only refer the reader to the surveys \cite{GAVENCIAK2020100596,kuccukyavuz2017introduction,twostage_survey} and the references therein regarding the practical methods.

The current state-of-the-art algorithms to solve \TwoStage{} ILPs admits a running time of $2^{(2\Delta)^{r^2s+rs^2}} n\log^3(n) \cdot |I|$ where $|I|$ is the binary encoding length of the input~\cite{DBLP:journals/corr/abs-1904-01361} or respectively of $n \log^{O(rs)}(n) 2^{(2\Delta)^{O(r^2 + rs)}}$~\cite{DBLP:journals/corr/abs-2012-11742} by a recent result. The first result improves upon the result in~\cite{DBLP:conf/ipco/Klein20} due to Klein where the dependence on $n$ was quadratic. The dependencies on the block dimensions and $|I|$ were similar. 
The first result in that respect was by Hemmecke and Schulz~\cite{Hemmecke_two_stage_03} who provided an algorithm with a running time of $f(r,s,t, \Delta) \cdot \text{poly}(n)$ for some computable function~$f$. However, due to the use of an existential result from commutative algebra, no explicit bound could be stated for $f$. 

Let us turn our attention to the \nfold{} ILPs for a moment, which where first introduced in \cite{DBLP:journals/disopt/LoeraHOW08}. These ILPs admit a constraint matrix which is the transpose of the \TwoStage{} constraint matrix. Despite being so closely related, \nfold{} ILPs can be solved in time near linear in the number of blocks and only single exponentially in the block-dimensions of $A_i^T, B_i^T$ \cite{DBLP:journals/corr/abs-2002-07745,jansen2018near_linear}. 

Thus, it is an intrinsic questions whether we can solve \TwoStage{} ILPs more efficient or -- as the latest algorithms suggest -- whether \TwoStage{} ILPs are indeed harder to solve than the closely related \nfold{} ILPs. We answer this question by showing a double-exponential lower bound in the running time for any algorithm solving the \textsc{\TwoStage{} integer linear programming} (\TwoStageProb) problem. Here, the \TwoStageProb{} problem is the corresponding decision variant which asks whether the ILP admits a feasible solution.  

To prove this hardness, we reduce from the \QC{} problem. This problem asks whether there exists a $z \leq \gamma$ such that $z^2 \equiv \alpha \bmod \beta$ for some $\gamma, \alpha, \beta \in \mathbb{N}$. This problem was proven to be NP-hard  by Manders and Adleman~\cite{DBLP:journals/jcss/MandersA78} already in 1978 by showing a reduction from $3$-\textsc{SAT}. This hardness even persists if the prime factorization of $\beta$ is given~\cite{DBLP:journals/jcss/MandersA78}. By this result, Manders and Adleman prove that it is NP-complete to compute the solutions of diophantine equations of degree $2$. However, their reduction yields large parameters. In detail, the occurrences of each prime factor in the prime factorization of $\beta$ is too large to obtain the desired lower bound for the \TwoStageProb{} problem. The occurrence of each prime factor is at least linear in the number of variables and clauses of the underlying $3$-\textsc{SAT} problem.

We give a new reduction yielding a stronger statement: The \QC{} problem is NP-hard even if the prime factorization of $\beta$ is given and each prime factor occurs at most once (except 2 which occurs four times). Beside being useful to prove the lower bounds for solving the \TwoStage{} ILPs, we think this results is of independent interest. We obtain a neat structure which may be helpful in various related problems or may yield stronger statements of past results which use the \QC{} problem.

In order to achieve the desired lower bounds on the running time we make use of the Exponential Time Hypothesis (ETH) -- a widely believed conjecture stating that the $3$-\textsc{SAT} problem cannot be solved in subexponentially time with respect to the number of variables:

\begin{conj}[ETH \cite{DBLP:journals/jcss/ImpagliazzoP01}]
The $3$-\textsc{SAT} problem cannot be solved in time less than $O(2^{\delta_\SatInd n_\SatInd})$ for some constant $\delta_\SatInd > 0$ where $n_\SatInd$ is the number of variables in the instance.
\end{conj}

Note that we use the index $3$ for all variables of the $3$-\textsc{SAT} problem.

Using the ETH, plenty lower bounds for various problems are shown, for an overview on the techniques and results see e.g.~\cite{DBLP:books/sp/CyganFKLMPPS15}. 
So far, the best algorithm runs in time $O(2^{0.387 n_{\SatInd}})$, \ie, it follows that $\delta_\SatInd \leq 0.387$ \cite{DBLP:books/sp/CyganFKLMPPS15}.

In the following, we also need the Chinese Remainder Theorem (CRT) for some of the proofs, which states the following:

\begin{proposition}[CRT \cite{DBLP:books/daglib/0068082}]
Let $n_1, \dots, n_k$ be pairwise co-prime. Further, let $i_1, \dots, i_k$ be some integers. Then there exists integers $x$ satisfying $x \equiv i_j \bmod n_j$ for all $j$. Further, any two solutions $x_1$, $x_2$ are congruent modulo $\prod_{j=1}^k n_j$. 
\end{proposition}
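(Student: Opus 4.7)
The plan is to prove existence by an explicit construction via Bezout's identity and uniqueness by a divisibility argument that leverages pairwise coprimality.

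For existence, I would set $N = \prod_{i=1}^k n_i$ and $N_i = N / n_i$. Pairwise coprimality of the $n_j$ implies $\gcd(N_i, n_i) = 1$, because any prime dividing $N_i$ must divide some $n_j$ with $j \neq i$ and hence cannot divide $n_i$. Bezout's identity then yields integers $M_i$ with $M_i N_i \equiv 1 \bmod n_i$. The candidate solution $x = \sum_{i=1}^k a_i M_i N_i$ satisfies the required congruences: modulo $n_i$, every summand indexed by $j \neq i$ vanishes (because $n_i \mid N_j$), while the $i$-th summand reduces to $a_i \cdot 1 = a_i$.

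For uniqueness, any two solutions $x_1, x_2$ satisfy $n_i \mid (x_1 - x_2)$ for every $i$. A short induction on $k$, using the elementary fact that $\gcd(a,b) = 1$ together with $a \mid c$ and $b \mid c$ implies $ab \mid c$, then shows $N \mid (x_1 - x_2)$, which is exactly $x_1 \equiv x_2 \bmod N$.

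The only point that demands genuine care is that \emph{pairwise} rather than merely joint coprimality is essential: it is needed both to invoke Bezout at each index $i$ (to obtain $M_i$) and in the inductive step of the uniqueness argument; with only the weaker hypothesis $\gcd(n_1,\dots,n_k)=1$ the statement fails outright, as $n_1 = n_2 = 2$ already shows. Beyond this subtlety the proof is entirely classical textbook material and no substantial obstacle arises, which is presumably why the authors simply cite \cite{DBLP:books/daglib/0068082}.
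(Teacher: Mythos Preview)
Your argument is correct and is the standard constructive proof of the Chinese Remainder Theorem. The paper itself does not prove this proposition at all; it merely states it with a citation to \cite{DBLP:books/daglib/0068082}, so there is nothing to compare against beyond noting that your proof is exactly the classical one that reference would contain.
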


\paragraph*{Summary of Results}
\begin{itemize}
\item We give a new reduction from the $3$-\textsc{SAT} problem to the \QC{} problem which proves a stronger NP-hardness result: The \QC{} problem remains NP-hard, even if the prime factorization of $\beta$ is given and each prime number greater than $2$ occurs at most once and the prime number $2$ occurs four times. This does not follow from the original proof. In contrast, the original proof generates each prime factor at least $O(n_\SatInd + m_\SatInd)$ times, where $m_\SatInd$ is the number of clauses in the formula. Our reduction circumvents this necessity, yet neither introduces noteworthily more nor larger prime factors. The proof is based on the original one. We believe this result is of independent interest.

\item Based on this new reduction, we show strong NP-hardness for the so-called \Rest{} problem. In this algorithmic number theoretic problem we are given $x_1, \dots, x_{n_{\MRD}}, y_1, \dots, y_{n_{\MRD}}, \zeta \in \mathbb{N}$ and pairwise coprime numbers $q_1, \dots, q_{n_{\MRD}}$. The question is to decide whether there exists a number $z \in \mathbb{Z}_{>0}$ with $z \leq \zeta$ satisfying the following congruences:
\begin{align*}
& z \bmod q_1 \in \{x_1, y_1\}  \\    
& z \bmod q_2 \in \{x_2, y_2\} \\  
& \vdots \\
& z \bmod q_{n_{\MRD}} \in \{x_{n_{\MRD}}, y_{n_{\MRD}}\} .
\end{align*}
In other words, either the residue $x_i$ or $y_i$ should be met for each equation. 
This problem is a natural generalization of the Chinese Remainder problem where $x_i = y_i$ for all $i$. In that case, however, the problem can be solved using the Extended Euclidean algorithm. To the best of our knowledge the \Rest{} problem has not been considered in the literature so far.  

\item Finally, we show that the \Rest{} problem can be modeled by a \TwoStage{} ILP. Assuming the ETH, we can then conclude a doubly exponential lower bound of $2^{2^{\delta(s+t)}}|I|^{O(1)}$ on the running time for any algorithm solving \TwoStage{} ILPs. The double exponential lower bound even holds if  $r=1$ and $\Delta, ||b||_{\infty}, ||c||_{\infty}  \in O(1)$. This proves the suspicion that \TwoStage{} ILPs are significantly harder to solve than \nfold{} ILPs with respect to the dimensions of the block matrices and $\Delta$. Furthermore, it implies that the current state-of-the-art algorithms for solving \TwoStage{} ILPs is indeed (nearly) optimal.
\end{itemize}

\paragraph*{Further Related Work}
In recent years there was significant progress in the development of algorithms for $n$-fold ILPs and lower bounds on the other hand. Assume the parameters as of the transpose of the \TwoStage{} constraint matrix, \ie, the blocks $A_i^T$ in the first few rows have dimension $r \times t$ and the blocks $B_i^T$ along the diagonal beneath admit a dimension of $s \times t$. The best known algorithms to solve these ILPs have a running time of $2^{O(rs^2)}(rs\Delta)^{O(r^2s+s^2)} (nt)^{1+o(1)}$~\cite{DBLP:journals/corr/abs-2002-07745} or respectively a running time of $(rs\Delta)^{r^2s+s^2} L^2 (nt)^{1+o(1)} $~\cite{jansen2018near_linear} where $L$ denotes the encoding length of the largest number in the input. The best known lower bound is $\Delta^{\delta_{\text{\nfold{}}}(r+s)^2}$ for some $\delta_{\text{\nfold{}}} > 0$~\cite{DBLP:journals/corr/abs-1904-01361}.

Despite their similarity, it seems that \TwoStage{} ILPs are significantly harder to solve than \nfold{} ILPs. Yet, no superexponential lower bound for the running time of any algorithm solving the \TwoStageProb{} problem was shown. There is a lower bound for a more general class of ILPs in~\cite{DBLP:journals/corr/abs-1904-01361} that contain \TwoStage{} ILPs showing that the running time is double-exponential parameterized by the topological height of the treedepth decomposition of the primal or dual graph. However, the topological height of \TwoStage{} ILPs is constant and thus no strong lower bound can be derived for this case.

If we relax the necessity of an integral solution, the \TwoStage{} LP problem becomes solvable in time $2^{2\Delta^{O(t^3)}} n \log^3(n) \log(||u-\ell||_{\infty})$ $\log(||c||_{\infty})$~\cite{DBLP:journals/corr/abs-1912-03501}. For the case of mixed integer linear programs there exists an algorithm solving \TwoStage{} MILPs in time $2^{\Delta^{\Delta^{t^{O(t^2)}}}} n \log^3(n) \log(||u-\ell||_{\infty}) \log(||c||_{\infty})$ \cite{DBLP:journals/corr/abs-1912-03501}. Both results rely on the fractionality of a solution, whose size is only dependent on the parameters. This allows us to scale the problem such that it becomes an ILP (as the solution has to be integral) and thus state-of-the-art algorithms for \TwoStage{} ILPs can be applied.

There are also studies for a more general case called 4-Block ILPs where the constraint matrix consists of non-zero entries in the first few columns, the first few rows and block-wise along the diagonal. This may be seen as the combination of \nfold{} and \TwoStage{} ILPs. Only little is known about them: They are in XP \cite{DBLP:conf/ipco/HemmeckeKW10}. Further, a lower and upper bound on the Graver Basis elements (inclusion-wise minimal kernel elements) of $O(n^{r} f(k,\Delta))$ was shown recently \cite{DBLP:conf/esa/0009K0S20}, where $r$ is the number of rows in the submatrix appearing repeatedly in the first few rows and $k$ denotes the sum of the remaining block dimensions.

\paragraph*{Structure of this Chapter}
Section~\ref{sec:2Stage:QCHardness} presents the stronger hardness result for the \QC{} problem we derive by giving a new reduction from the $3$-\textsc{SAT} problem. Then we show that the \QC{} problem can be modeled as a \TwoStage{} ILP in Section~\ref{sec:RedTo2Stage}. To do so, we introduce a new problem called the \Rest{} problem as an intermediate step during the reduction. Finally, in Section~\ref{sec:ETH} we bring the reductions together to prove the desired lower bound. This involves a construction which lowers the absolute value of $\Delta$ at the cost of slightly larger block dimensions.

\section{Advanced Hardness for \QC{}}\label{sec:2Stage:QCHardness}
This section proves that every instance of the $3$-\textsc{SAT} problem can be transformed into an equivalent instance of the \QC{} problem in polynomial time. Recall that the \QC{} problem asks whether there exists a number $z \leq \gamma$ such that $z^2 \equiv \alpha \bmod \beta$ holds. This problem was proven to be NP-hard by Manders and Adleman~\cite{DBLP:journals/jcss/MandersA78} showing a reduction from $3$-SAT. This hardness even persists when the prime factorization of $\beta$ is given~\cite{DBLP:journals/jcss/MandersA78}. However, we aim for an even stronger statement: The \QC{} problem remains NP-hard even if the prime factorization of $\beta$ is given and each prime number greater than $2$ occurs at most once and the prime number $2$ occurs four times. This does not follow from the original hardness proof. In contrast, if $n_\SatInd$ is the number of variables and $m_\SatInd$ the number of clauses in the $3$-SAT formula then $\beta$ admits a prime factorization with $O(n_\SatInd+m_\SatInd)$ different prime numbers each with a multiplicity of at least $O(n_\SatInd+m_\SatInd)$.
Even though our new reduction lowers the occurrence of each prime factor greatly, we neither introduces noteworthily more nor larger prime factors.

While the structure of our proof resembles that of the original one from~\cite{DBLP:journals/jcss/MandersA78}, adapting it to our needs requires various new observations concerning the behaviour of the newly generated prime factors and the functions we introduce. The original proof heavily depends on the numbers being high powers of the prime factors whereas we employ careful combinations of (new) prime factors. This requires us to introduce other number theoretical results into the arguments.

In the following, before presenting the reduction and showing its correctness formally, we want to give an idea of the hardness proof. The reduction may seem non-intuitive at first as it only shows the final result of equivalent transformations between various problems until we reach the \QC{} one. In the following, we list all these problems in order of their appearance whose strong NP-hardness is shown implicitly along the way. Afterwards, we give short ideas of their respective equivalence, which is then proved formally in separate claims in the next theorem. Note that not all variables are declared at this point, but also not necessary to understand the proof sketch.

\begin{itemize}
\item ($3$-\textsc{SAT}) Is there a truth assignment $\eta$ that satisfies all clauses $\sigma_k$ of the $3$-SAT formula $\Phi$ simultaneously?
\item (P2) Are there values $y_k \in \{0,1,2,3\}$ and a truth assignment $\eta$ such that $0 = y_k - \sum_{x_i \in \sigma_k} \eta(x_i) - \sum_{\bar{x_i} \in \sigma_k} (1 - \eta(x_i)) +1$ for all $k$?
\item (P3) Are there values $\alpha_j \in \{-1, +1\}$ such that $\sum_{j=0}^\nu \theta_j \alpha_j \equiv \tau \bmod 2^3 \cdot p^* \prod_{i=1}^{m'} p_i$ for some $\theta_j$ and $\tau$ specified in dependence on the formula later on and some prime numbers $p_i$ and $p^*$?
\item (P5) Is there an $x \in \mathbb{Z}$ satisfying
\begin{align*}
\tag{P5.1}
& 0 \leq |x| \leq H \\
\tag{P5.2}
& x \equiv \tau \bmod 2^3 \cdot p^* \prod_{i=1}^{m'} p_i \\
\tag{P5.3}
& (H+x)(H-x) \equiv 0 \bmod K
\end{align*}
for some $H$ dependent on the $\theta_j$ and $K$ being a product of primes?
\item (P6)  Is there an $x \in \mathbb{Z}$ satisfying
\begin{align*}
\tag{P6.1}
& 0 \leq |x| \leq H\\
\tag{P6.2}
& (\tau-x)(\tau+x) \equiv 0 \bmod 2^4 \cdot p^* \prod_{i=1}^{m'} p_i \\
\tag{P6.3}
& (H+x)(H-x) \equiv 0 \bmod K ?
\end{align*}
\item (\QC{}) Is there a number $x \leq H$ such that $(2^4 \cdot p^* \cdot  \prod_{i=1}^{m'} p_i + K)x^2 \equiv K\tau^2 + 2^4 \cdot p^* \cdot  \prod_{i=1}^{m'} p_i H^2 \bmod 2^4 \cdot p^* \cdot  \prod_{i=1}^{m'} p_i \cdot K ?$
\end{itemize}

The $3$-\textsc{SAT} problem is transformed to Problem (P2) by using the straight-forward interpretation of truth values as numbers $0$ and $1$ and the satisfiability of a clause as the sum of its literals being larger zero. Introducing slack variables~$y_k$ yields the above form.

Multiplying each equation of (P2) with exponentially growing factors and then forming their sum preserves the equivalence of these systems. Introducing some modulo consisting of unique prime factors larger than the outcome of the largest possible sum obviously does not influence the system. Replacing the variables $\eta(x_i)$ and $y_k$ by variables $\alpha_j$ with domain $\{-1, +1\}$, re-arranging the term and defining parts of the formula as the variables $\theta_j$ and $\tau$ yields Problem~(P3).

We then introduce some Problem (P4)  to integrate the condition $x \leq H$. The problem asks whether there exists some $x \in \mathbb{Z}$ such that 
\begin{align*}
\tag{P4.1}
& 0 \leq |x| \leq H \\
\tag{P4.2}
& (H + x) (H -x) \equiv 0 \bmod K?
\end{align*}
By showing that each solution to the system (P4) is of form $\sum_{j=0}^\nu \theta_j \alpha_j$ we can combine (P3) and (P4) yielding (P5).

Using some observations about the form of solutions for the second constraint of Problem (P5) we can re-formulate it as Problem (P6).

Next, we use the fact that $p^* \prod_{i=1}^{m'} p_i$ and $K$ are co-prime per definition and thus we can combine (P6.2) and (P6.3) to one equivalent equation. To do so, we take each left-hand side of (P6.2) and (P6.3) and multiply the modulo of the respective other equation and form their overall sum. Using a little re-arranging this finally yields the desired \QC{} problem. 

Before we finally present the reduction, we first prove a lemma about the size of the product of prime numbers, which comes in handy in the respective theorem. 

\begin{lemma}\label{l:ProdPrimes}
Denote by $q_i$ the $i$th prime number. The product of the first $k$ prime numbers $\prod_{i=1}^k q_i$ is bounded by $2^{2k \log(k)}$ for all $k \geq 2$.
\end{lemma}
\begin{proof}
Denote by $\pi(x)$ the number of prime numbers of size at most $x$. It holds that $\pi(x) > x/\log(x)$ for $x \geq 17$ \cite{Rosser1962ApproximateFF}. Note that the original statement uses the natural logarithm. But due to the division, the estimation also holds for the logarithm with base 2. Setting $x = y^2$, it holds that $\pi(y^2) > y^2/\log(y^2)$ for $y \geq 5$. As $y^2/\log(y^2) = y^2/(2\log(y)) \geq y^2/y = y$ for $y \geq 5$, it also holds that $\pi(y^2) > y$ for $y \geq 5$. Thus $p_i < i^2$ for $i \geq 5$, as we have at least $i$ many prime numbers in the interval~$[1, i^2]$. 

Manually checking the values for the first four prime numbers shows that the equation $p_i \leq i^2$ even holds for all prime numbers greater $2$. For $p_1 = 2 > 1^2$, we can simply multiply an additional factor of $2$. Altogether, we can thus estimate the product of the first $k$ prime numbers for $k \geq 2$ as

\begin{align*}
\prod_{i=1}^k q_i \leq \prod_{i=1}^k (i^2) \cdot 2 = (\prod_{i=1}^k i)^2 \cdot 2 = (k!)^2 \cdot 2 \leq (2 (k/2)^k)^2 \cdot 2 \\
= 2^2 ((k/2)^k)^2 \cdot 2 = 2^3 (k/2)^{2k} = 2^3 2^{2k \log(k/2)} \leq 2^{2k \log(k)}
\end{align*}
proving the statement. We use the estimation $k! = 2(k/2)^k$ which can easily be proved using induction. Further, note that $k \geq 2$ has to hold for the last estimation. 
\end{proof}

\begin{theorem}\label{t:3SattoQCRed}
The \QC{} problem is NP-hard even if the prime factorization of $\beta$ is given and each prime factor greater than $2$ occurs at most once and the prime factor $2$ occurs $4$ times.
\end{theorem}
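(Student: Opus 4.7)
The plan is to adapt the classical reduction of Manders and Adleman from $3$-\textsc{SAT} to $\QC$, but with much stricter control over the prime factorization of $\beta$. Starting with a $3$-\textsc{SAT} instance with $\ell$ variables and $m$ clauses, I would construct $\alpha, \beta, \gamma \in \mathbb{Z}$ such that satisfying assignments of the formula correspond bijectively to solutions $z \leq \gamma$ of $z^2 \equiv \alpha \bmod \beta$. The high-level idea is standard: assign one distinct odd prime $p_i$ to each variable $x_i$ and fix $\alpha \bmod p_i$ to be a quadratic residue, so that $z^2 \equiv \alpha \bmod p_i$ has exactly two roots, one encoding $x_i = \text{true}$ and one encoding $x_i = \text{false}$. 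By the Chinese Remainder theorem, the $2^{\ell}$ roots of $z^2 \equiv \alpha \bmod \prod p_i$ are then in one-to-one correspondence with truth assignments, and the bound $z \leq \gamma$ is tuned to reject exactly those assignments that falsify at least one clause.

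The technical heart of the proof is the choice of primes and residues so that $\gamma$ cleanly separates satisfying from non-satisfying assignments. I would build $\gamma$ as a carefully weighted sum of CRT lifts of the ``true'' residues, introducing for each clause $C_j$ a dedicated odd prime $q_j$ (used with multiplicity one) whose residue modulo $q_j$ is chosen so that the CRT lift corresponding to the all-literals-false configuration of $C_j$ overshoots the contribution that $\gamma$ reserves for that clause. In the original proof the room needed to arrange these magnitude inequalities is created by taking large prime powers, but here the whole construction has to be carried out with squarefree odd moduli, and the only remaining slack has to come from the single prime factor $2^4$ of $\beta$. The role of $16$ is to supply the four distinct square roots of a suitable residue modulo $16$, which gives the additional two bits per instance needed to align the high-order contributions of the CRT lifts inside the interval $[0,\gamma]$ exactly where the squarefree construction would otherwise be too rigid.

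After setting up $\alpha, \beta = 16 \prod p_i \prod q_j$, and $\gamma$, I would verify correctness in two directions. For the forward direction, given a satisfying assignment I would exhibit a CRT lift $z \leq \gamma$ by piecing together the ``true'' or ``false'' residue at each $p_i$, the chosen residue at each $q_j$, and a compatible residue modulo $16$, then bounding the result against $\gamma$ using the prescribed CRT weights. For the backward direction, I would use that every root of $z^2 \equiv \alpha \bmod \beta$ decomposes, by CRT and the squarefreeness of $\beta / 16$, into an independent choice of sign at each $p_i$ and $q_j$ together with one of four residues modulo $16$; the inequality $z \leq \gamma$ then forces these choices to encode a satisfying assignment. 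A final step is to check that all primes $p_i$ and $q_j$ can be selected of size polynomial in $\ell + m$ (using density estimates for primes in arithmetic progressions, as in the original proof), so that the reduction is genuinely polynomial-time.

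The step I expect to be the main obstacle is the magnitude bookkeeping in the absence of high multiplicities. Without prime powers, each variable and each clause contributes only a single prime factor to $\beta$, so the CRT weights are essentially rigid and cannot be rescaled as freely as in \cite{DBLP:journals/jcss/MandersA78}. I must therefore show that a careful choice of the primes $p_i$ and $q_j$, together with the fixed factor $16$ and the four square roots it supplies, leaves just enough slack to make the separating inequalities for $\gamma$ work out simultaneously for all $m$ clauses. This requires a somewhat delicate number-theoretic analysis of quadratic residues modulo a squarefree product times $16$, replacing the elementary but prime-power-based arithmetic of the original argument.
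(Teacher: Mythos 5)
Your proposal has a genuine gap: the separating mechanism you describe is not the one that makes the Manders--Adleman reduction work, and as stated it cannot work. You propose that the bound $z \leq \gamma$ be ``tuned to reject exactly those assignments that falsify at least one clause,'' i.e.\ that the magnitude of the CRT lift itself distinguishes satisfying from unsatisfying assignments. But the CRT lift of a tuple of residues is $\sum_i v_i e_i \bmod \beta$ for fixed idempotents $e_i$, and the reduction modulo $\beta$ destroys any additive or monotone relation between the coordinate choices and the size of the lift; there is no reason a single threshold separates the (arbitrary) subset of sign-vectors encoding satisfying assignments from the rest. Moreover, a ``clause prime'' $q_j$ cannot detect the all-literals-false configuration of $C_j$, since the residue of $z$ modulo $q_j$ is an independent CRT coordinate that carries no information about the residues modulo the variable primes. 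In the actual reduction the threshold plays a different role: clause satisfaction is enforced by a \emph{linear} congruence $x \equiv \tau \bmod 2^3 p^{*}\prod_{i\le m'} p_i$ (obtained by weighting each clause defect $R_k$ with the telescoping product $\prod_{i\le k}p_i$), while the bound $|x|\le H$ together with $(H+x)(H-x)\equiv 0 \bmod K$ only forces $x$ into the structured form $x=\sum_j \theta_j\alpha_j$ with $\alpha_j\in\{-1,+1\}$; the two coprime moduli are then merged into one quadratic congruence. The factor $2^4$ arises because $\tau$ is odd, so in $(\tau-x)(\tau+x)\equiv 0 \bmod 2^4\cdots$ the powers of $2$ must split as $2^1\cdot 2^3$, recovering the linear congruence modulo $2^3$ from one of the two factors --- not because four square roots modulo $16$ supply ``two extra bits of slack.''

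The part you correctly flag as the main obstacle --- doing the magnitude bookkeeping without high prime multiplicities --- is exactly the technical heart of the paper's proof, and your proposal does not solve it. The paper replaces each high prime power by a row of an $(n+1)^2$ grid of fresh primes $p_{i,k}$, defines $\theta_j \equiv c_j \bmod 2^3 p^{*}\prod p_i$ with $\theta_j \equiv 0$ modulo all $p_{i,k}$ with $i\neq j$ and $\theta_j\not\equiv 0 \bmod p_{j,1}$, and then proves both that every $x=\sum_j\theta_j\alpha_j$ solves $(H+x)(H-x)\equiv 0\bmod K$ and, via the bound $2H<K$ (which requires choosing the $p_{i,k}$ larger than $(4(n+1)2^3\prod q_i)^{1/(n^2+2n+1)}$), that these are the \emph{only} solutions with $|x|\le H$. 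Your construction with one squarefree prime per variable and per clause has only $O(\ell+m)$ prime factors and no analogue of this divisibility pattern, so the uniqueness of the structured solutions --- the step on which the whole equivalence rests --- is missing.
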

\begin{proof}
We show a reduction from the well-known NP-hard problem $3$-\textsc{SAT} where we are given a $3$-SAT formula $\Phi$ with $n_\SatInd$ variables and $m_\SatInd$ clauses. \\

\noindent\textit{Transformation:} 
First, eliminate duplicate clauses from $\Phi$ and those where some variable $x_i$ and its negation $\bar{x_i}$ appear together. Call the resulting formula $\Phi'$, the number of occurring variables $n'$ and denote by $m'$ the number of appearing clauses respectively. Let $\Sigma = (\sigma_1, \dots, \sigma_{m'})$ be some enumeration of the clauses. Denote by $p_0, \dots, p_{2m'}$ the first $2m'+1$ prime numbers. Compute
\begin{align*}
\tau_{\Phi'} = - \sum_{i=1}^{m'} \prod_{j = 1}^{i} p_j .
\end{align*}  

Further, compute for each $i \in 1, 2, \dots, n'$:
\begin{align*}
f_i^+ = \sum_{x_i \in \sigma_j} \prod_{k = 1}^{j} p_k \text{\,\,\, and\,\,\,}
f_i^- = \sum_{\bar{x_i} \in \sigma_j} \prod_{k = 1}^{j} p_k .\\
\end{align*}
 
Set $\nu = 2m'+n'$. Compute the coefficients $c_j$ for all $j = 0, 1, \dots, \nu$ as follows: Set $c_0 = 0$. For $j = 1, \dots, 2m'$ set
\begin{align*}
c_j = -\frac{1}{2} \prod_{i = 1}^{j} p_i \text{\,\,\, for\,} j = 2k-1 \text{\,\,\, and\,\,\,}
c_j = - \prod_{i = 1}^{j} p_i \text{\,\,\, for\,} j = 2k.\\
\end{align*}

Compute the remaining ones for $j = 1, \dots n'$ as $c_{2m'+j} = 1/2 \cdot (f_j^+ - f_j^-)$. Further, set $\tau = \tau_{\Phi'} + \sum_{j=0}^{\nu} c_j + \sum_{i=1}^{n'} f_i^-$. 

Denote by $q_1, \dots, q_{\nu^2+2\nu+1}$ the first $\nu^2+2\nu+1$ prime numbers. Let $p_{0,0}, p_{0,1}, \dots,$ $p_{0,\nu},  p_{1,0}, \dots,  p_{\nu,\nu}$ be the first $(\nu+1)^2 = \nu^2 + 2\nu + 1$ prime numbers greater than $(4(\nu+1) 2^3 \prod_{i=1}^{\nu^2+2\nu+1} q_i)^{1/((\nu^2 + 2\nu + 1)\log(\nu^2+2\nu+1))}$ and greater than $p_{2m'}$.  Define $p^*$ as the $(\nu^2+2\nu+2m'+13)$th prime number.

Determine the parameters $\theta_j$ for $j = 0, 1, \dots, \nu$ as the least $\theta_j$ satisfying:
\begin{align*}
    & \theta_j \equiv c_j \bmod 2^3 \cdot p^* \prod_{i=1}^{m'} p_i, \\
    & \theta_j \equiv 0 \bmod \prod_{i = 0, i\neq j}^{\nu} \prod_{k=0}^{\nu}p_{i,k}, \\
    & \theta_j \not\equiv 0 \bmod p_{j,1} .
\end{align*}

Set the following parameters:
\begin{align*}
    & H = \sum_{j=0}^{\nu} \theta_j
    \text{\,\,\,and\,\,\,} K = \prod_{i = 0}^{\nu} \prod_{k=0}^{\nu}p_{i,j}  .
\end{align*}
Finally, set
\begin{align*}
    & \alpha = (2^4 \cdot p^* \prod_{i=1}^{m'} p_i + K)^{-1} \cdot (K\tau^2 + 2^4 \cdot p^* \prod_{i=1}^{m'} p_i\cdot H^2),\\
    & \beta = 2^4 \cdot p^* \prod_{i=1}^{m'} p_i \cdot K,\\
    & \gamma = H.
\end{align*}
where $(2^4 \cdot p^* \prod_{i=1}^{m'} p_i + K)^{-1}$ is the inverse of $(2^4 \cdot p^* \prod_{i=1}^{m'} p_i + K) \bmod 2^4 \cdot p^* \prod_{i=1}^{m'} p_i \cdot K$. \\

\noindent\textit{Correctness:}
We show that the satisfiability of the formula~$\Phi$ is equivalent to a line of (systems of) equations, \ie, the formula has a satisfying truth assignment on the variables if and only if the  (systems of) equations admit a solution.
By this, we prove the hardness for various problems along the way. These are listed above with their respective equivalence sketched. In the following, we separate each of these steps by claims.

However, before we start with the transformations of the formula, we first observe some properties about the generated prime factors. These come in handy for the estimations later on. In particular, we want to show that choosing $p^*$ as the $(\nu^2+2\nu+2m'+13)$th prime factor satisfies $p^* > p_{\nu,\nu}$: Suppose $p_{2m'} \geq (4(\nu+1) 2^3 \cdot \prod_{i=1}^{\nu^2+2\nu+1} q_i)^{1/((\nu^2 + 2\nu + 1)\log(\nu^2+2\nu+1))}$. Then $p_{\nu,\nu}$ is the $(\nu^2+2\nu+1+2m'+1)$th prime number and thus $p^* > p_{\nu,\nu}$. Otherwise, if $p_{2m'} < (4(\nu+1) 2^3 \prod_{i=1}^{\nu^2+2\nu+1} q_i)^{1/((\nu^2 + 2\nu + 1)\log(\nu^2+2\nu+1))}$, we bound the function values as follows:

\begin{align*}
& (4(\nu+1) 2^3 \prod_{i=1}^{\nu^2+2\nu+1} q_i)^{1/((\nu^2 + 2\nu + 1)\log(\nu^2+2\nu+1))} \\
& = 4^{1/((\nu^2 + 2\nu + 1)\log(\nu^2+2\nu+1))} (\nu+1)^{1/((\nu^2 + 2\nu + 1)\log(\nu^2+2\nu+1))}\\
& \text{\,\,\,\,\,\,\,\,}\cdot (2^3)^{1/((\nu^2 + 2\nu + 1)\log(\nu^2+2\nu+1))}\\
& \text{\,\,\,\,\,\,\,\,}\cdot (\prod_{i=1}^{\nu^2+2\nu+1} q_i)^{1/((\nu^2 + 2\nu + 1)\log(\nu^2+2\nu+1))}\\
& \leq 2 \cdot 2 \cdot 2 \cdot (2^{2(\nu^2 + 2\nu + 1)\log(\nu^2+2\nu+1)})^{1/((\nu^2 + 2\nu + 1)\log(\nu^2+2\nu+1))} \\
& \leq 8 \cdot (4^{(\nu^2 + 2\nu + 1)\log(\nu^2+2\nu+1)})^{1/((\nu^2 + 2\nu + 1)\log(\nu^2+2\nu+1))} \\
& = 8 \cdot 4
 = 32 .
\end{align*}
The second transformation holds as the product of the first $k$ prime numbers is bounded by $2^{2k\log(k)}$ (for $k \geq 2$, which obviously holds here), see Lemma~\ref{l:ProdPrimes}. There are 11 prime numbers in the interval $[1,32]$. Thus, $p_{\nu,\nu}$ is at most the $(11+\nu^2+2\nu+1)$th prime number and thus $p^* > p_{\nu,\nu}$. 

Further, note that $p^* \leq \prod_{i=m'+1}^{\nu^2+2\nu+1} q_i$: 
We can bound the value of the product from beneath as $\prod_{i=m'+1}^{\nu^2+2\nu+1} q_i \geq q_{m'+1}^{\nu^2+\nu}$.
Estimating the value for $p^*$, we use that the value of the next prime number after a number $\rho$ is at most $2\rho$ \cite{bertrand2018bertrand}. Thus, as there are $\nu^2+2\nu+m'+11$ prime numbers between $p_{m'+1}$ and $p^*$, we get $p^* \leq q_{m'+1} \cdot 2^{\nu^2+2\nu+m'+11} \leq  q_{m'+1} \cdot 2^{\nu^2+3\nu+11}$ since per definition $\nu \geq m'$ holds.
Dividing both sides of the estimation by $q_{m'+1}$, it thus remains to show that $2^{\nu^2+3\nu+11} \leq  q_{m'+1}^{\nu^2+\nu-1}$. 
Obviously, $q_{m'+1}^{\nu^2+\nu-1}$ grows for larger values of $m'$. The smallest reasonable value for $m' = 2$ and thus $q_{m'+1} \geq 5$. By that, we get that 
\begin{align*}
q_{m'+1}^{\nu^2+\nu-1} \geq 5^{\nu^2+\nu-1} \geq 2^{2(\nu^2+\nu-1)} = 2^{2\nu^2+2\nu-2} \geq 2^{\nu^2+3\nu+11}
\end{align*}
for all $\nu \geq 5$ and thus for all reasonable values of $\nu$, showing the statement.

Let us now focus on the transformations of the formula $\Phi$ yielding the first claim:

\begin{claim}\label{claim:diss:SatToP2}
The $3$-\textsc{SAT} problem asking whether there is a truth assignment $\eta$ that satisfies all clauses $\sigma_k$ of the $3$-SAT formula $\Phi$ simultaneously is a yes-instance if and only if Problem (P2) asking whether there are values $y_k \in \{0,1,2,3\}$ and a truth assignment $\eta$ such that $0 = R_k = y_k - \sum_{x_i \in \sigma_k} \eta(x_i) - \sum_{\bar{x_i} \in \sigma_k} (1 - \eta(x_i)) +1$ for all $k$ is a yes-instance.
\end{claim}
\begin{proof}
Obviously, the reduced formula $\Phi'$ is satisfiable if and only if $\Phi$ is. The formula~$\Phi'$ is satisfiable if there exists a truth assignment $\eta \colon \{x_1, \dots, x_{n'}\} \rightarrow \{0,1\}$ assigning a logical value to each variable~$x_1, \dots, x_{n'}$ which satisfies all clauses $\sigma_1, \dots, \sigma_{m'}$ simultaneously. This can be re-written to the following equation for each clause $\sigma_k \in \Phi_k$ interpreting the truth values as numbers:
\begin{align*}
& 0 = R_k = y_k - \sum_{x_i \in \sigma_k} \eta(x_i) - \sum_{\bar{x_i} \in \sigma_k} (1 - \eta(x_i)) +1 \text{,\,\,\,} y_k \in \{0,1,2,3\}.
\end{align*}

For a clause $\sigma_k$, this equation is only satisfiable if at least one variable $x_i \in \sigma_k$ has value $\eta(x_i) = 1$ or one variable occurring in its negation $\bar{x_i} \in \sigma_k$ has value $\eta(x_i) = 0$. Otherwise, we have to set $y_k = -1$ which is not allowed.
\end{proof}

Note that we never have to set $y_k = 3$ to satisfy the formula. However, we allow this value as it will come in handy later on when transforming the equation. Further, set $0 = R_0 = \alpha_0 + 1$ for $\alpha_0 \in \{-1, +1\}$ for later convenience. Clearly, the new equation is satisfiable. 

\begin{claim}\label{claim:diss:P2ToP3}
The Problem (P2) asking whether there are values $y_k \in \{0,1,2,3\}$ and a truth assignment $\eta$ such that $0 = R_k = y_k - \sum_{x_i \in \sigma_k} \eta(x_i) - \sum_{\bar{x_i} \in \sigma_k} (1 - \eta(x_i)) +1$ for all $k$ is a yes-instance if and only if Problem (P3) asking whether there are values $\alpha_j \in \{-1, +1\}$ such that $\sum_{j=0}^\nu \theta_j \alpha_j \equiv \tau \bmod 2^3 \cdot p^* \prod_{i=1}^{m'} p_i$ is a yes-instance.
\end{claim}
\begin{proof}
We can bound the values of $R_k$ for $k \in \{0, 1, \dots, m'\}$ by $-2 \leq R_k \leq 4$. For the lower bound, the values are given by $y_k = 0$, all $x_i \in \sigma_k$ have value $\eta(x_i) = 1$ and all $\bar{x_i} \in \sigma_k$ have value $\eta(x_i) = 0$. For the upper bound we set $y_k = 3$, all $x_i \in \sigma_k$ to $\eta(x_i) = 0$ and $\bar{x_i} \in \sigma_k$ to $\eta(x_i) = 1$. For $R_0$ obviously $0 \leq R_k \leq 2$ holds. Thus,
\begin{align*}
R_k = 0 \text{,\,} \forall k \in \{0, 1, \dots, m'\} \Leftrightarrow \sum_{k=0}^{m'} R_k \prod_{i=0}^k p_i = 0
\end{align*}
as the sum is zero if all $R_k = 0$. For the opposite direction, if the sum is zero, then no $R_k \neq 0$
as the product of the prime numbers grows too fast. Thus, the other summands cannot compensate for some $R_k \neq 0$. We can bound the expression further by
\begin{align*}
|\sum_{k=0}^{m'} R_k \prod_{i=0}^k p_i| \leq 4 \sum_{k=0}^{m'} \prod_{i=0}^k p_i \leq 4 (m'+1) \prod_{i=0}^{m'} p_i < 2^3 \cdot p^* \prod_{i=1}^{m'} p_i  
\end{align*}
as $p^* >  p_{\nu,\nu} > p_{m'} > m'+1$. This yields 
\begin{align*}
\tag{I}
R_k = 0 \text{,\,} \forall k \in \{0, 1, \dots, m'\} \Leftrightarrow \sum_{k=0}^{m'} R_k \prod_{i=0}^k p_i \equiv 0 \bmod 2^3 \cdot p^* \prod_{i=1}^{m'} p_i 
\end{align*}
as the modulo has no impact on the satisfiability of the equation.

Next, we aim to re-write $R_k$ by replacing the variables $y_k$ and $\eta(x_i)$ with new variables admitting a domain of $\{-1, 1\}$:
\begin{align*}
& y_k = 1/2 \cdot [ (1-\alpha_{2k-1}) + 2 \cdot (1-\alpha_{2k}) ] \text{,\,\,\,} k \in \{1, \dots, m'\}, \\
& \eta(x_i) = 1/2 \cdot (1 - \alpha_{2m'+i}) \text{,\,\,\,} i \in \{1, \dots, n'\}.
\end{align*}

Obviously the value domains of $y_k$ and $\eta(x_i)$ are preserved. Substituting the variables and re-arranging the equation (I) yields

\begin{align*}
\sum_{j=0}^{\nu} c_j \alpha_j \equiv \tau \bmod 2^3 \cdot p^* \prod_{i=1}^{m'} p_i \text{, \,} \alpha_j \in \{-1, +1\}.
\end{align*}
By definition of $\theta_j$ this is equivalent to
\begin{align*}
\sum_{j=0}^\nu \theta_j \alpha_j \equiv \tau \bmod 2^3 \cdot p^* \prod_{i=1}^{m'} p_i  \text{, \,} \alpha_j \in \{-1, +1\}
\end{align*}
proving the claim.
\end{proof}

Let $H = \sum_{j=0}^{\nu} \theta_j$ and $K = \prod_{i = 0}^{\nu} \prod_{j=0}^{\nu}p_{i,j}$ be defined as before. Consider the following system asking whether there is a $x \in \mathbb{Z}$ such that:
\begin{align*}
\tag{P4.1}
& 0 \leq |x| \leq H\\
\tag{P4.2}
& (H + x) (H -x) \equiv 0 \bmod K
\end{align*}
We use this system to integrate the condition $x \leq H$ into the transformations. In the following, we prove that each solution of this system is of form $x = \sum_{j=0}^\nu \alpha_j \theta_j$ and thus Problem (P4) can be combined with Problem (P3) yielding Problem~(P5).

\begin{claim}\label{claim:diss:P3ToP5}
The Problem (P3) asking whether there are values $\alpha_j \in \{-1, +1\}$ such that $\sum_{j=0}^\nu \theta_j \alpha_j \equiv \tau \bmod 2^3 \cdot p^* \prod_{i=1}^{m'} p_i$ is a yes-instance if and only if the Problem~(P5) is a yes-instance.
\end{claim}
\begin{proof}
The unique solutions $x$ to the given system (P4) are of form
\begin{align*}
x = \sum_{j=0}^\nu \alpha_j \theta_j \text{, \,} \alpha \in \{-1, +1\} \text{, \,} j = 0, 1, \dots, \nu.
\end{align*}
Let us first verify that an $x$ of such form solves the system. First 
\begin{align*}
|x| = |\sum_{j=0}^\nu \alpha_j \theta_j | \leq \sum_{j=0}^\nu \theta_j = H
\end{align*}
satisfies (P4.1). Further, we have that each summand in the expanded formula~$(H+x)(H-x)$ has to contain all prime factors $p_{i,j}$ for $i = 0, 1, \dots, \nu$ and $j = 0, 1, \dots, \nu$ in its prime factorization to satisfy (P4.2). For $(H+x) = (\sum_{j=0}^{n} \theta_j + \sum_{j=0}^n \theta_j \alpha_j)$ it holds that each $\theta_j$ where $\alpha_j = +1$ occurs twice while each $\theta_j$ where $\alpha_j = -1$ is canceled out by $H$. The other way round holds for $(H-x)$. Thus, expanding the brackets yields that each summand is a product of some $\theta_j$ and $\theta_k$ where $\alpha_j = +1$ and $\alpha_k = -1$. This implies that $j \neq k$. As each $\theta_j$ contains all prime factors of $K$ except $p_{j,0}, \dots, p_{j, \nu}$, the product of two different $\theta_j$ and $\theta_k$ contains each prime factor occurring in $K$ satisfying (P4.2). 

Regarding the uniqueness, observe that 
\begin{align*}
(H + x) (H -x) \equiv 0 \bmod \prod_{j=0}^\nu p_{i,j} \text{, \,} \forall i = 0, 1, \dots, \nu .
\end{align*}
Assume there exists some number $\tilde{p} = \prod_{j=0}^\nu p_{i,j}$ for some $i \in \{0, 1, \dots, \nu\}$ which divides $(H+x)$ and $(H-x)$ (without remainder). Thus, $(H+x)+(H-x) \equiv 0 \bmod \tilde{p} \Leftrightarrow 2H \equiv 0 \bmod \tilde{p}$. As $\tilde{p}$ is a product of prime numbers greater than $2$ is follows that $H \equiv 0 \bmod \tilde{p} \Leftrightarrow \sum_{j=0}^{\nu} \theta_j \equiv 0 \bmod \tilde{p}$. However, from the definition of $\theta_j$ (third condition) it follows that for each $j$ there exist different prime numbers not present in the prime factorization of $\theta_j$ contradicting the assumption. Thus, $\tilde{p}$ divides either $(H+x)$ or $(H-x)$ (without remainder). Define
\begin{align*}
& \alpha_i = 
\begin{cases}
    +1 & \text{ if \,} (H-x) \equiv 0 \bmod \prod_{j=0}^\nu p_{i,j}\\
    -1 & \text{ if \,} (H+x) \equiv 0 \bmod \prod_{j=0}^\nu p_{i,j}
\end{cases} \\
 &   x' = \sum_{i=0}^\nu \alpha_i \theta_i .
\end{align*}

In the following, we show that $x' \equiv x \bmod  \prod_{j=0}^\nu p_{i,j}$ holds:
\begin{align*}
& x' \equiv x \bmod  \prod_{j=0}^\nu p_{i,j} \\
& \Leftrightarrow \sum_{i=0}^\nu \alpha_i \theta_i \equiv x \bmod \prod_{j=0}^\nu p_{i,j} \\
& \Leftrightarrow \alpha_i \theta_i \equiv x \bmod \prod_{j=0}^\nu p_{i,j}
\end{align*}
\begin{align*}
& \Leftrightarrow \sum_{k=0}^\nu \alpha_i \theta_k \equiv x \bmod \prod_{j=0}^\nu p_{i,j}\\
& \Leftrightarrow  \alpha_i \sum_{k=0}^\nu \theta_k \equiv x \bmod \prod_{j=0}^\nu p_{i,j}\\
& \Leftrightarrow  \alpha_i H \equiv x \bmod \prod_{j=0}^\nu p_{i,j}
\end{align*}
for all $i \in \{0, 1, \dots, \nu\}$. The first transformation simply inserts the definition of $x'$. Due to the definition of the $\theta_k$, only the summand $\theta_i$ remains after calculating the modulo. Thus, we can sum up all $\theta_k$ with arbitrary sign as they equal zero after calculating the modulo. In the last step we insert the definition of $H$. Now we either have $\alpha_j = +1$. Then $H \equiv x \bmod \prod_{j=0}^\nu p_{i,j}$, \ie, $H - x\equiv 0 \bmod \prod_{j=0}^\nu p_{i,j}$, which is true by definition of $\alpha_j = +1$. Otherwise, $\alpha_j = -1$. Then $-H \equiv x \bmod \prod_{j=0}^\nu p_{i,j}$, \ie, $H + x \equiv 0 \bmod \prod_{j=0}^\nu p_{i,j}$, which is again true by the definition of $\alpha_j$. Thus, the initial statement is correct.

So, as $\alpha_j \in \{-1,+1\}$ for all $j \in \{0, 1, \dots, \nu\}$, it holds that $-H \leq x \leq H$. Since the same holds for $x'$ it follows that $|x-x'| \leq 2H$. Let us bound the value of $\lambda_j = \theta_j / (\prod_{i = 0, i\neq j}^{\nu} \prod_{k=0}^{\nu}p_{i,k})$. It holds that $\theta_j \leq  2^4 \cdot p^* \prod_{i=1}^{m'} p_i \cdot \prod_{i = 0, i\neq j}^{\nu} \prod_{k=0}^{\nu}p_{i,k}$, as $2^3 \cdot p^* \prod_{i=1}^{m'} p_i$ and $\prod_{i = 0, i\neq j}^{\nu} \prod_{k=0}^{\nu}p_{i,k}$ are coprime and thus the least $\theta_j$ satisfying the equivalence conditions in the definition of $\theta_j$ is at most their product~\cite{schroeder2009chinese}.
The additional factor of $2$ is introduced by the inequality constraint $\theta_j \not\equiv 0 \bmod p_{j,1}$, as if the calculated $\theta_j$ for the equality constraints does not satisfy that condition, we can extend it to $\theta_j' = \theta_j + 2^3 \cdot p^* \prod_{i=1}^{m'} p_i \cdot \prod_{i = 0, i\neq j}^{\nu} \prod_{k=0}^{\nu}p_{i,k}$. This doubles the size estimation and as $p_{j,1}$ is coprime to $2^3 \cdot p^* \prod_{i=1}^{m'} p_i \cdot \prod_{i = 0, i\neq j}^{\nu} \prod_{k=0}^{\nu}p_{i,k}$, it holds that $\theta_j'$ is not equivalent to $0 \bmod p_{j,1}$.
Thus,
\begin{align*}
& \lambda_j = \frac{\theta_j}{\prod_{i = 0, i\neq j}^{\nu} \prod_{k=0}^{\nu}p_{i,k}} \\
& \leq \frac{2^4 \cdot p^* \prod_{i=1}^{m'} p_i \cdot \prod_{i = 0, i\neq j}^{\nu} \prod_{k=0}^{\nu}p_{i,k}}{\prod_{i = 0, i\neq j}^{\nu} \prod_{k=0}^{\nu}p_{i,k}} \\
& = \frac{2^4 \cdot p^* \prod_{i=1}^{m'} p_i \cdot K/(\prod_{k=0}^\nu p_{j, k})}{\prod_{i = 0, i\neq j}^{\nu} \prod_{k=0}^{\nu}p_{i,k}}\\
& = \frac{2^4 \cdot p^* \prod_{i=1}^{m'} p_i \cdot K}{\prod_{i = 0}^{\nu} \prod_{k=0}^{\nu}p_{i,k}} \\
& \leq \frac{2^4 \cdot p^* \prod_{i=1}^{m'} p_i \cdot K}{4(\nu+1)2^3 \cdot p^* \prod_{i=1}^{m'} p_i} \\
& = \frac{K}{2(\nu+1)}.
\end{align*}
To validate the fourth estimation, we have to prove that $\prod_{i = 0}^{\nu} \prod_{k=0}^{\nu}p_{i,k} \geq 4(\nu+1)2^3 \cdot p^* \prod_{i=1}^{m'} p_i$. As previously shown, it holds that $p^* \leq \prod_{i=m'+1}^{\nu^2+2\nu+1} q_i$. Thus, 
\begin{align*}
4(\nu+1)2^3 \cdot p^* \prod_{i=1}^{m'} p_i \leq 4(\nu+1)2^3 \prod_{i=m'+1}^{\nu^2+2\nu+1} q_i \prod_{i=1}^{m'} p_i = 4(\nu+1)2^3 \prod_{i=1}^{\nu^2+2\nu+1} q_i .
\end{align*}
Hence, it remains to prove that $\prod_{i = 0}^{\nu} \prod_{k=0}^{\nu}p_{i,k} \geq 4(\nu+1)2^3 \prod_{i=1}^{\nu^2+2\nu+1} q_i$. Per definition,  $p_{0,0} > \max\{p_{2m'}, q_{11}\}$. Thus, comparing the factors of both products, we see that $ 4(\nu+1)2^3 \prod_{i=1}^{\nu^2+2\nu+1} q_i$ has $4(\nu+1)2^3$ and the first $\max\{2m', 11\}$ prime numbers smaller than $p_{0,0}$ uniquely, whereas $\prod_{i = 0}^{\nu} \prod_{k=0}^{\nu}p_{i,k}$ has the largest $\max\{2m', 11\}$ prime factors uniquely. 
Let us consider the smallest case where $\max\{2m', 11\} = 4$ as the smallest reasonable value for $m' = 2$ (a formula with just one clause is trivial). The smallest reasonable value for $\nu = 7$ if $m' = 2$ and $n'=3$ (less than 3 variables is not possible).
Now it is easy to prove via manual calculation that the product of $4(\nu+1)2^3$ times the first $4$ prime numbers (smaller than $p_{0,0}$) is indeed smaller than the product of the next $4$ prime numbers larger than $q_{\nu^2+2\nu+1} = q_{65}$.
If $m'$ grows, we get the same number of additional prime factors for both products, whereas each new prime number in $\prod_{i = 0}^{\nu} \prod_{k=0}^{\nu}p_{i,k}$ is larger than the additional ones in the other product. If we have larger values for $\nu$, it only influences the product $4(\nu+1)2^3 \cdot p^* \prod_{i=1}^{m'} p_i$ linearly, whereas for the other product, we start with greater prime numbers, thus having a larger impact on the product. Hence, the estimation is correct for all values.

The term $K/(2(\nu+1))$ bounds each summand of $H$ as it considers their largest value to satisfies the constraints as well as the modulo when calculating the values (see definition of $\theta_k$). It follows that $2H = 2 \sum_{j=0}^\nu \theta_j < 2 \cdot (\nu+1) \cdot K/(2(\nu+1)) = K$. Thus, $x = x'$ and we conclude that solution of the form $x=\sum_{j=0}^\nu \theta_j \alpha_j$ are the unique solutions to the system (P4.1) and (P4.2).

Thus, we can re-write 
\begin{align*}
\sum_{j=0}^\nu \theta_j \alpha_j \equiv \tau \bmod 2^3 \cdot p^* \prod_{i=1}^{m'} p_i  \text{, \,} \alpha_j \in \{-1, +1\}
\end{align*}
using the system (P4.1) and (P4.2) to the following one:
\begin{align*}
\tag{P5.1}
& 0 \leq |x| \leq H \text{, \,} x \in \mathbb{Z}\\
\tag{P5.2}
& x \equiv \tau \bmod 2^3 \cdot p^* \prod_{i=1}^{m'} p_i \\
\tag{P5.3}
& (H+x)(H-x) \equiv 0 \bmod K 
\end{align*}
proving their equivalence.
\end{proof}

\noindent Next, we re-write the system (P5) to:
\begin{align*}
\tag{P6.1}
& 0 \leq |x| \leq H \text{, \,} x \in \mathbb{Z}\\
\tag{P6.2}
& (\tau-x)(\tau+x) \equiv 0 \bmod 2^4 \cdot p^* \prod_{i=1}^{m'} p_i \\
\tag{P6.3}
& (H+x)(H-x) \equiv 0 \bmod K .
\end{align*}

\begin{claim}\label{claim:diss:P5ToP6}
The Problem (P5) is a yes-instance if and only if the Problem (P6) is a yes-instance.
\end{claim}
\begin{proof}
As only the second conditions differ, we focus on their equivalence in the following.
First, we prove that if (P5.2) holds, \ie, $x \equiv \tau \bmod 2^3 \cdot p^* \prod_{i=1}^{m'} p_i$, then (P6.2) holds, \ie, $(\tau-x)(\tau+x) \equiv 0 \bmod 2^4 \cdot p^* \prod_{i=1}^{m'} p_i$. We can re-write (P5.2) to $x = \lambda 2^3 \cdot p^* \prod_{i=1}^{m'} p_i + \tau$ for some $\lambda \in \mathbb{Z}$. Inserting this in (P6.2) yields:
\begin{align*}
&(\tau + \lambda 2^3 \cdot p^* \prod_{i=1}^{m'} p_i + \tau)(\tau - \lambda 2^3 \cdot p^* \prod_{i=1}^{m'} p_i - \tau) \\
&= (2\tau + \lambda 2^3 \cdot p^* \prod_{i=1}^{m'} p_i) (\lambda 2^3 \cdot p^* \prod_{i=1}^{m'} p_i) \equiv 0 \bmod 2^3 \cdot p^* \prod_{i=1}^{m'} p_i
\end{align*}
as each factor is multiplied with $\lambda 2^3 \cdot p^* \prod_{i=1}^{m'} p_i$.

Next, we prove the opposite direction. First, observe that if $(\tau-x)(\tau+x) \equiv 0 \bmod 2^4 \cdot p^* \prod_{i=1}^{m'} p_i$ then either $(\tau-x) \equiv 0 \bmod 2^3$ or $(\tau+x) \equiv 0 \mod 2^3$: As (P5.2) holds, $(\tau+x) = \lambda_i \cdot 2^i$ and $(\tau-x) = \lambda_j \cdot 2^j$ for some $i,j \in \mathbb{Z}$ and $\lambda_i, \lambda_j \not\equiv 0 \bmod 2$. It follows that

\begin{align*}
& (\tau+x)+(\tau-x) = \lambda_i \cdot 2^i + \lambda_j \cdot 2^j \\
& \Leftrightarrow 2 \tau = \lambda_i \cdot 2^i + \lambda_j \cdot 2^j \\
& \Leftrightarrow \tau = \lambda_i \cdot 2^{i-1} + \lambda_j \cdot 2^{j-1}.
\end{align*}
As $\tau$ is odd per definition, either $i$ or $j$ has to be $1$ and thus the other parameter has to be $3$. 
Using this, we know that if $x$ satisfies $(P5.2)$, then $(\tau-x) \equiv 0 \bmod 2^3$ or $(\tau+x) \equiv 0 \mod 2^3$. In the first case, $x$ directly corresponds to a solution of (P6.2) as $x-\tau$ is a multiple of $2^3$ and thus $x$ is a multiple of $2^3$ with a residue of $\tau$. Otherwise $-x$ satisfies the condition using the same argument. Obviously the other conditions are also satisfied in both systems. 
\end{proof}

Lastly, we re-write the system one final time to:
\begin{align*}
\tag{QC.1}
& 0 \leq x \leq H \text{, \,} x \in \mathbb{Z}\\
& 2^4 \cdot p^* \cdot  \prod_{i=1}^{m'} p_i (H^2 - x^2) + K (\tau^2 - x^2)\\
\tag{QC.2}
& \equiv 0 \bmod 2^4 \cdot p^* \cdot  \prod_{i=1}^{m'} p_i \cdot K .
\end{align*}

\begin{claim}\label{claim:diss:P6ToQC}
The Problem (P6) is a yes-instance if and only if the \QC{} problem is a yes-instance.
\end{claim}
\begin{proof}
First, as we only consider $x^2$, we can suppose $x \geq 0$ and thus re-writting (P6.1) to (QC.1) is correct. Further, (P6.2) and (P6.3) merge into (QC.2). Recall that $2^4 \cdot p^* \cdot  \prod_{i=1}^{m'} p_i$ and $K$ are co-prime. The first summand obviously always contains the factor $2^4 \cdot p^* \cdot  \prod_{i=1}^{m'} p_i$, thus we have to find an $x$ such that $(H^2 - x^2) \equiv 0 \bmod K$ which corresponds to (P6.3). The second summand clearly is a multiple of $K$, thus we have to assure that $(\tau^2 - x^2) \equiv 0 \bmod 2^4 \cdot p^* \cdot  \prod_{i=1}^{m'} p_i$. This matches (P5.2).

Dissolving the brackets and rearranging the term (QC.2) we get
\begin{align*}
&(2^4 \cdot p^* \cdot  \prod_{i=1}^{m'} p_i + K)x^2\\
& \equiv K\tau^2 + 2^4 \cdot p^* \cdot  \prod_{i=1}^{m'} p_i H^2 \bmod 2^4 \cdot p^* \cdot  \prod_{i=1}^{m'} p_i \cdot K .
\end{align*}
As $2^4 \cdot p^* \cdot  \prod_{i=1}^{m'} p_i + K$ is relatively prime to $2^4 \cdot p^* \cdot  \prod_{i=1}^{m'} p_i \cdot K$ it has an inverse modulo $2^4 \cdot p^* \cdot  \prod_{i=1}^{m'} p_i \cdot K$ \cite{lame1844note}. Thus, multiplying by the inverse we get the values for $\alpha, \beta$ and $\gamma$ as in the transformation above.
\end{proof}

Overall, this proves that satisfying the formula $\Phi$ is equivalent to an instance of the \QC{} problem admitting a feasible solution.\\

\noindent\textit{Running time:}
All steps, numbers and their computation can be bounded in a polynomial dependent of $n_\SatInd$, \ie, the number of variables in the $3$-Sat formula, and $m_\SatInd$, \ie, the number of clauses in the formula. 
First, we eliminate unnecessary clauses from the formula. Thus, we have to go through all clauses once. The first $2m'+1$ prime numbers have a value of at most $O(m' \log(m'))$ and can thus be found in polynomial time via sieving. The function $(4(\nu+1) 2^3 \prod_{i=1}^{\nu^2+2\nu+1} q_i)^{1/((\nu^2+2\nu+1)\log(\nu^2 + 2\nu + 1))}$ is at most $32$ as shown before. Thus, we can also bound the value of the next $\nu^2 + 2\nu + 1$ prime numbers larger than $32$ and $p_{2m'}$ by a polynomial in $n_\SatInd$ and $m_\SatInd$ and we can compute them efficiently by sieving. All other numbers calculated in the transformation are a product or sum over these prime numbers (each occurring at most once in the calculation) and thus their values are also in poly$(n_\SatInd, m_\SatInd)$. We can compute the inverse $(2^4 \cdot p^* \prod_{i=1}^{m'} p_i + K)^{-1}$ in polynomial time \cite{lame1844note}.
\end{proof}

Now we have proved that the \QC{} problem is NP-hard even in the restricted case where all prime factors in $\beta$ only appear at most once (except 2). To apply the ETH, however, we also have to estimate the dimensions of the generated instance. Denote by $B = b_1^{\beta_1}, \dots, b_{n_{\QCInd}}^{\beta_{n_{\QCInd}}}$ the prime factorization of $\beta$ where $b_1, \dots, b_{n_{\QCInd}}$ denotes the different prime factors of $\beta$ and $\beta_i$ the occurrence of $b_i$. The above reduction yields the following parameters:

\begin{theorem}\label{t:3SattoQCInstanceSize}
An instance of the $3$-\textsc{SAT} problem with $n_\SatInd$ variables and $m_\SatInd$ clauses is reducible to an instance of the \QC{} problem in polynomial time with the properties that  $\alpha, \beta, \gamma \in 2^{O((n_\SatInd+m_\SatInd)^2 \log(n_\SatInd + m_\SatInd))}$, $n_{\QCInd} \in O((n_\SatInd+m_\SatInd)^2)$, $max_{i}\{b_i\} \in O((n_\SatInd+m_\SatInd)^2\log(n_\SatInd+m_\SatInd))$, and each prime factor in $\beta$ occurs at most once except the prime factor $2$ which occurs four times.
\end{theorem}
\begin{proof}
In Theorem \ref{t:3SattoQCRed}, we already showed and proved a reduction from the $3$-\textsc{SAT} problem to the \QC{} problem and argued the running time. It remains to bound the parameters. To do so, we bound the numbers occurring in the reduction above in order of their appearance.

After eliminating the trivial clauses it obviously holds that $m' \leq m_\SatInd$ and $n' \leq n_\SatInd$. Next, we calculate $\tau_{\Phi'}$. Its absolute value can be bounded as
\begin{align*}
&|\tau_{\Phi'}| = |- \sum_{i=1}^{m'} \prod_{j = 1}^{i} p_j| 
= \sum_{i=1}^{m'} \prod_{j = 1}^{i} p_j \\
&\leq m_\SatInd \prod_{j = 1}^{m_\SatInd} p_j 
\leq m_\SatInd 2^{2m_\SatInd \log(m_\SatInd)} \leq 2^{O(m_\SatInd \log(m_\SatInd))} 
\end{align*}
since the product of the first $k$ prime numbers is bounded by $2^{2k\log(k)}$ for all $k \geq 2$, see Lemma~\ref{l:ProdPrimes}. Similarly, $\max_i\{|f^+_i|, |f^-_i|\} \leq \sum_{x_i \in \sigma_j} \prod_{k = 1}^{j} p_k + \sum_{\bar{x_i} \in \sigma_j} \prod_{k = 1}^{j} p_k \leq 2m_\SatInd \cdot 2^{2m_\SatInd \log(m_\SatInd)}  \leq 2^{O(m_\SatInd \log(m_\SatInd))} $ and $\max_j\{c_j\} = \max_j\{\prod_{i=1}^j p_i, f^+_j + f^-_j\} \leq 2^{O(m_\SatInd \log(m_\SatInd))}$. Per definition, $\nu = 2m' + n' = O(n_\SatInd + m_\SatInd)$. 
The largest prime number $max_{i}\{b_i\}$ we generate in the reduction is $p^*$, which is the $(\nu^2+2\nu+2m'+13)$th prime number. Thus, its value is bounded by $p^* \leq O(\nu^2 \log(\nu)) = O((n_\SatInd+m_\SatInd)^2\log(n_\SatInd+m_\SatInd))$~\cite{hardy1916contributions}. 
Due to the modulo, we can bound $\max_j\{\theta_j\}$ as
\begin{align*}
&\max_j\{\theta_j\} \leq 2^4 \cdot p^* \prod_{i=1}^{m'} p_i \cdot \prod_{i = 0, i\neq j}^{\nu} \prod_{k=0}^{\nu} p_{i,k} \\
&\leq 2^4 2^{O((n_\SatInd+m_\SatInd)^2 \log(n_\SatInd + m_\SatInd))} = 2^{O((n_\SatInd+m_\SatInd)^2 \log(n_\SatInd + m_\SatInd))}.
\end{align*}
Thus, $H = \sum_{j=0}^\nu \theta_j \leq \nu \cdot 2^{O((n_\SatInd+m_\SatInd)^2 \log(n_\SatInd + m_\SatInd))} = 2^{O((n_\SatInd+m_\SatInd)^2 \log(n_\SatInd + m_\SatInd))}$ and $K = \prod_{i = 0}^{\nu} \prod_{k=0}^{\nu}p_{i,k} \leq2^{O((n_\SatInd+m_\SatInd)^2 \log(n_\SatInd + m_\SatInd))}$. Finally, we can bound the main parameters. As $\alpha$ is bounded by the modulo of $\beta$ is follows that $\alpha \leq \beta$. Further, $\beta = 2^4 \cdot p^* \prod_{i=1}^{m'} p_i \cdot K \leq 2^{O((n_\SatInd+m_\SatInd)^2 \log(n_\SatInd + m_\SatInd))}$. Per definition $\gamma = H$ and thus $\gamma \leq 2^{O((n_\SatInd+m_\SatInd)^2 \log(n_\SatInd + m_\SatInd))}$, which finalizes the estimation of the numbers. 
\end{proof}

\section{Reduction from the \QC{} problem}\label{sec:RedTo2Stage}
This sections presents the reduction from the \QC{} problem to the \TwoStageProb{} problem. First, we present a transformation of an instance of the \QC{} problem to an instance of the \Rest{} problem. This problem was not considered so far and serves as an intermediate step in this chapter. However, it might be of independent interest as it generalizes the prominent Chinese Remainder theorem.  Secondly, we show how an instance of the \Rest{} problem can be modeled as a \TwoStage{} ILP.
Recall that in the \Rest{} problem, we are given numbers $x_1, \dots, x_{n_{\MRD}}, y_1, \dots, y_{n_{\MRD}}, q_1, \dots, q_{n_{\MRD}}, \zeta \in \mathbb{N}$ where the $q_i$s are pairwise co-prime. The question is to decide whether there exists a natural number $z$ satisfying the following integer linear program and which is smaller or equal to $\zeta$:

\begin{align*}
& z \bmod q_1 \in  \{x_1, y_1\} \\    
& z \bmod q_2 \in  \{x_2, y_2\} \\  
& \vdots \\
& z \bmod q_{n_{\MRD}} \in \{x_{n_{\MRD}}, y_{n_{\MRD}}\} .
\end{align*}
In other words, we either should met the residue $x_i$ or $y_i$. Thus, we can re-write the equation as $z \equiv x_i \bmod q_i$ or $z \equiv y_i \bmod q_i$ for all $i$. 
Indeed, this problem becomes easy if $x_i = y_i$ for all $i$, \ie, we know the remainder we want to satisfy for each equation \cite{wagon1999mathematica}: First, compute $s_i$ and $r_i$ with $r_i \cdot q_i + s_i \cdot \prod_{j=1, j\neq i}^{n_{\MRD}} q_j = 1$ for all $i$ using the Extended Euclidean algorithm. Now it holds that $s_i \cdot \prod_{j=1, j\neq i}^{n_{\MRD}} q_j \equiv 1 \bmod q_i$ as $q_i$ and $\prod_{j=1, j\neq i}^{n_{\MRD}} q_j$ are coprime, and $s_i \cdot \prod_{j=1, j\neq i}^{n_{\MRD}} q_j \equiv 0 \bmod q_j$ for $j \neq i$. Thus, the smallest solution corresponds to $z = \sum_{i=1}^{n_{\MRD}} x_i \cdot s_i \cdot \prod_{j=1, j\neq i}^{n_{\MRD}} q_j$ due to the Chinese Remainder theorem \cite{wagon1999mathematica}. Comparing $z$ to the bound $\zeta$ finally yields the answer. Also note that if $n_{\MRD}$ is constant, we can solve the problem by testing all possible vectors $(v_1, \dots, v_{n_{\MRD}})$ with $v_i \in \{x_i, y_i\}$ and then use the Chinese Remainder theorem as explained above. 

\begin{theorem}\label{t:QCtoMRD}
The \QC{} problem is reducible to the \Rest{} problem in polynomial time with the properties that $n_{\MRD} \in O(n_{\QCInd})$, $\max_{i \in \{1, \dots, n_{\MRD}\}}\{q_i, x_i, y_i\}=O(\max_{j \in \{1, \dots, n_{\QCInd}\}}\{b_j^{\beta_j}\}$, and $\zeta \in O(\gamma)$.
\end{theorem}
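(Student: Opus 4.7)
The plan is to exploit the Chinese Remainder Theorem to decompose the \QC{} constraint $z^2 \equiv \alpha \bmod \beta$ along the prime factorization of $\beta$. By Theorem~\ref{t:3SattoQCInstanceSize} every odd prime in this factorization appears to the first power and the prime $2$ appears to the fourth, so the prime-power moduli $p_j^{\beta_j}$ are pairwise coprime and $z^2 \equiv \alpha \bmod \beta$ holds iff $z^2 \equiv \alpha \bmod p_j^{\beta_j}$ for every~$j$. I will encode each such prime-power congruence as a single \Rest{} constraint, set $\zeta := \gamma$, and rely on CRT to re-combine the chosen residues into an actual square root of $\alpha$ modulo $\beta$.

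For each odd prime factor $p_j$, I would invoke the Tonelli--Shanks algorithm to compute in polynomial time the (at most two) square roots $r^+_j, r^-_j$ of $\alpha$ modulo $p_j$. If $\alpha$ is not a quadratic residue modulo some $p_j$, the \QC{} instance is infeasible and the reduction outputs a trivially unsatisfiable \Rest{} instance. Otherwise I emit the constraint $z \equiv \{r^+_j, r^-_j\} \bmod p_j$ with $q_j := p_j$, $x_j := r^+_j$, $y_j := r^-_j$. By CRT, the choices $v_j \in \{x_j, y_j\}$ are in bijection with the square roots of $\alpha$ modulo the odd part of $\beta$, which is exactly what the semantics of \Rest{} demands.

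The main obstacle is the factor $2^4 = 16$: the congruence $z^2 \equiv \alpha \bmod 16$ can have up to four solutions, whereas each \Rest{} constraint offers only two residues. The key observation is that in the relevant regime --- $\alpha \equiv \tau^2 \bmod 16$ with $\tau$ odd, hence $\alpha \equiv 1 \bmod 8$ --- the four roots always form the set $\{r, -r, r+8, -r+8\} \bmod 16$, which collapses to the single residue pair $\{r, -r\} \bmod 8$. Thus one constraint with $q_0 := 8$, $x_0 := r$, $y_0 := (-r) \bmod 8$ captures exactly these solutions, and the modulus $8$ is coprime to all the odd primes used elsewhere. Degenerate subcases (even $\alpha$, or $\alpha$ lacking a square root mod $16$) can be handled by a short case split on the $2$-adic valuation of $\alpha$, still using only a constant number of \Rest{} constraints.

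Finally, the parameter bounds will follow directly: one \Rest{} constraint per prime-power factor of $\beta$ gives $n_{\MRD} \le n_{\QCInd} + O(1) \in O(n_{\QCInd})$; each modulus $q_i$ equals either a prime factor $p_j$ of $\beta$ or the constant $8$, and each residue is below $q_i$, yielding $\max_i\{q_i, x_i, y_i\} = O(\max_j p_j^{\beta_j})$; and $\zeta = \gamma$ as required. Correctness reduces to the CRT equivalence sketched above, and all ingredients --- Tonelli--Shanks, CRT bookkeeping, and the case analysis for the factor $16$ --- run in time polynomial in the encoding length of the \QC{} instance.
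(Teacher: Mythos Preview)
Your approach coincides with the paper's: split $z^2 \equiv \alpha \bmod \beta$ via CRT along the prime-power factorization of $\beta$, compute the square roots of $\alpha$ modulo each factor, emit one \Rest{} constraint per factor with these roots as the pair $\{x_i,y_i\}$, and set $\zeta=\gamma$; the correctness argument and the parameter bounds are identical. The one place you deviate is the factor $2^4=16$. The paper simply takes $q_i=16$ with $\{x_i,y_i\}=\{r,\,q_i-r\}$, whereas you observe that an odd quadratic residue modulo $16$ has \emph{four} square roots and instead pass to modulus~$8$, where the set $\{r,-r,r+8,-r+8\}$ collapses to the pair $\{r,-r\}$. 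Your treatment is actually the more careful one: with only two of the four roots listed at modulus~$16$, the forward direction of the reduction is not literally justified (a valid $z$ with $z\equiv r+8\bmod 16$ would satisfy the \QC{} instance but neither listed residue), and your check that $z\equiv \pm r \bmod 8$ iff $z^2\equiv r^2 \bmod 16$ closes this cleanly without affecting the claimed bounds, since $8$ is still coprime to every odd $q_j$ and bounded by $\max_j p_j^{\beta_j}$.
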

\begin{proof}
\textit{Transformation:}
Set $q_1 = b_1^{\beta_1}, \dots, q_{n_{\MRD}} = b_{n_{\QCInd}}^{\beta_{\QCInd}}$ and $\zeta = \gamma$ where $\beta_i$ denotes the occurrence of the prime factor $b_i$ in the prime factorization of $\beta$.
Compute $\alpha_i \equiv \alpha \bmod q_i$.
Set $x_i^2 = \alpha_i$ if there exists such an $x_i \in \mathbb{Z}_{q_i}$. Further, compute $y_i = -x_i+q_i$. If there is no such number $x_i$ and thus $y_i$, produce a trivial no-instance. \\

\noindent \textit{Instance size:} The numbers we generate in the reduction equal the prime numbers of the \QC{} problem including their occurrence. Hence, it holds that $\max_{i \in \{1, \dots, n_{\MRD}\}}\{q_i\}$ $=O(\max_{j \in \{1, \dots, n_{\QCInd}\}}\{b_j^{\beta_j}\}$. Due to the modulo, this value also bounds $x_i$ and $y_i$. The upper bound on a solution equals the ones from the instance of the \QC{} problem, \ie,  $\zeta \in O(\gamma)$, and $n_{\MRD} = n_{\QCInd}$ holds.\\

\noindent \textit{Correctness:}
First, let us verify that producing a trivial no-instance is correct if we cannot find some $x_i$. Indeed, this can be traced back to the Chinese Remainder theorem: If and only if there is an $x$ with $x^2 \equiv \alpha \bmod \beta$ and $q_1, \dots, q_{n_{\MRD}}$ (\ie, the equivalences to $b_i^{\beta_i}$) is the prime factorization of $\beta$, then $x^2 \equiv \alpha_i \bmod q_i$, $\alpha_i \in \mathbb{Z}_{q_i}$ for all $i$. In other words, it is has to be dividable by all $b_i^{\beta_i}$ yielding the same remainder $\alpha$ (modulo $b_i^{\beta_i}$). Hence, if there does not exists a square root of $\alpha$ in one of the systems then $x^2 \equiv \alpha \bmod \beta$ has no solution.

But if there exists $x_i$ and $y_i$, these values are in $\mathbb{Z}_{q_i}$ as $x_i \leq \alpha_i < q_i$ per definition of $x_i$ and $\alpha_i$. Further, both values solve the problem $x_i^2, y_i^2 \equiv \alpha \bmod q_i$ as $x_i^2 \equiv \alpha_i \bmod q_i \equiv \alpha_i + \lambda \cdot q_i \bmod q_i \equiv \alpha \bmod q_i$
for some $\lambda \in \mathbb{N}$.
Moreover,

\begin{align*}
& y_i^2 \equiv (-x_i + q_i)^2 \bmod q_i = q_i^2 - 2x_iq_i + x_i^2 \bmod q_i \\
& \equiv x_i^2 \bmod q_i \equiv \alpha \bmod q_i.
\end{align*}
The third equation holds as each summand except the last one is a multiple of $q_i$. The last transformation is true due to the computation above.

Note that for all prime numbers greater than $2$ it holds that $x_i \neq y_i$. This can easily be seen as we already argued that $x_i$ and $y_i$ are in $\mathbb{Z}_{p_i}$. Let us suppose both values are equal, \ie,
\begin{align*}
& x_i^2 = y_i^2 \\
& \Leftrightarrow \alpha_i = (-x_i + q_i)^2 \\
& \Leftrightarrow \alpha_i = q_i^2 - 2q_ix_i + x_i^2 \\
& \Leftrightarrow \alpha_i = q_i^2 - 2q_ix_i + \alpha_i \\
& \Leftrightarrow 2q_ix_i = q_i^2\\
& \Leftrightarrow 2x_i = q_i .
\end{align*}
The factor $q_i$ is a product of some prime number greater than $2$ by the assumption above. Thus, there is no $x_i $ satisfying the formula. \\

\noindent Let us now prove the equivalence of the reduction.\\

$\Rightarrow$ Let the instance of the \QC{} problem be a yes-instance. Then there exists a $z$ satisfying $z^2 \equiv \alpha \bmod \beta$ with $0 < z \leq \gamma$. This solution directly corresponds to a solution of the generated instance of the \Rest{} problem. First, $z \leq \gamma = \zeta$. Secondly, $z$ satisfies all equations as it holds that 
\begin{align*}
 z^2 \equiv \alpha \bmod \beta \equiv \alpha \bmod \prod_{i=1}^{n_{\MRD}} b_i^{\beta_i} \equiv \alpha \bmod b_i^{\beta_i} \text{\, for all \,} i.
\end{align*}
The first equivalence holds as the $b_i^{\beta_i}$s are the prime factorization of $\beta$. The second equivalence is true as we can decompose the solution as follows: $z^2 = \lambda \cdot \prod_{i=1}^{n} b_i^{\beta_i} + \alpha$ for some $\lambda \in \mathbb{N}$. Thus, the first summand is not only divided without remainder by $\prod_{i=1}^{n_{\MRD}} b_i^{\beta_i}$ but also by all primes along with their occurrences alone, leaving only the second summand $\alpha$ as the remainder. Further, since $x_i^2, y_i^2 \equiv \alpha \bmod q_i$ as shown before, it holds that
\begin{align*}
    z^2 \equiv \alpha \bmod b_i^{\beta_i} \equiv \alpha \bmod q_i \equiv x_i^2 \equiv y_i^2  \text{\, for all \,} i. 
\end{align*}
Hence, this satisfies all equations of the generated instance of the \Rest{} problem making it a yes-instance.

$\Leftarrow$ Let the instance of the \Rest{} problem be a yes-instance. Hence, we could verify that there exists a solution to the given equations smaller than $\zeta$. Let this solution be denoted as $z^*$. It holds that $z^* \equiv x_i \bmod q_i$ or $z \equiv y_i \bmod q_i$. Let $v_i$ correspond to the residue that was satisfied, \ie, $v_i = x_i$ or $v_i = y_i$. The solution $z^*$ also solves the \QC{} problem. First, $z^* \leq \zeta = \gamma$. Further, it holds per definition of the numbers that
\begin{align*}
    (z^*)^2 \equiv (v_i)^2 \equiv \alpha \bmod q_i \text{\, for all \,} i. 
\end{align*}
As it satisfies all equations simultaneously and the $b_i$ are pairwise co-prime, it follows from the Chinese Remainder theorem that
\begin{align*}
  &  (z^*)^2 \equiv (v_i)^2 \equiv \alpha \bmod q_i \text{\, for all \,} i \\
   & \equiv (z^*)^2 \equiv \alpha \bmod \prod_{i=1}^{n_{\MRD}} q_i \equiv \alpha \bmod \prod_{i=1}^{n_{\QCInd}} b_i^{\beta_i} \equiv \alpha \bmod \beta
\end{align*}
as the $b_i^{\beta_i}$s are the prime factorization of $\beta$. \\

\noindent \textit{Running time:} Setting the variables accordingly can be done in time polynomial in $n_{\QCInd}$. Further, computing each $x_i, y_i$ can be done in poly-logarithmic time regarding the largest absolute number for each $i \in \{1, \dots, n_{\MRD}\}$~\cite{crandall2006prime}.
\end{proof}

Finally, we reduce the \Rest{} problem to the \TwoStageProb{} problem. Note that the considered \TwoStageProb{} problem is a decision problem. In other words, we only seek to determine whether there exists a feasible solution. We neither optimize a solution vector nor are we interested in the solution vector itself.

\begin{theorem} \label{t:MRDtoTwoStage}
The \Rest{} problem is reducible to the  \TwoStageProb{} problem in polynomial time with the properties that $n \in O(n_{\MRD})$, $r,s,t, ||c||_{\infty}, ||b||_{\infty}, ||\ell||_{\infty} \in O(1)$, $||u||_{\infty} \in O(\zeta)$, and $\Delta \in O(\max_i\{q_i\})$.
\end{theorem}
\begin{proof}
\textit{Transformation:} Having the instance for the \Rest{} problem at hand we construct our ILP as follows with $n = n_{\MRD}$:

\begin{equation*}
\mathcal A \cdot x =
\begin{pmatrix}
-1       & q_1  & x_1  & y_1  &0&\dots&0& 0     & \dots & 0 \\
0       & 0     & 1     & 1     &0&\dots&0& 0     & \dots & 0 \\
\vdots  &\vdots &\ddots &\ddots &\ddots&\ddots&\ddots&\ddots &\ddots &\ddots \\
-1       & 0     & \dots & 0     &0&\dots&0& q_n  & x_n  & y_n \\
0       & 0     & \dots & 0     &0&\dots&0& 0     & 1     & 1 \\
\end{pmatrix} \cdot x = b =
\begin{pmatrix}
0 \\
1 \\
\vdots \\
0 \\
1 \\
\end{pmatrix}.
\end{equation*}
All variables get a lower bound of $0$ and an upper bound of $\zeta$. We can set the objective function arbitrarily as we are just searching for a feasible solution, hence we set it to $c = (0, 0, \dots, 0)^\top$. \\

\noindent \textit{Instance size:} Due to our construction, it holds that $t = 2, r = 1, s = 3$. The number $n$ of repeated blocks equals the number $n_{\MRD}$ of equations in the instance of the \Rest{} problem. The largest entry $\Delta$ can be bounded by $\max_i\{q_i\}$. The lower and upper bounds are at most $||u||_{\infty} = O(\zeta)$, $||\ell||_{\infty} = O(1)$. The objective function $c$ is set to zero and is thus of constant size. The largest value in the right-hand side is $||b||_\infty = 1$.\\

\noindent \textit{Correctness:}
$\Rightarrow$ Let the given instance o the \Rest{} problem be a yes-instance. Thus, there exists a solution $z^* < \zeta$ satisfying all equations. As before, let $v_i$ correspond to the remainder that was satisfied in each equation~$i$, \ie, $v_i = x_i$ or $v_i = y_i$. A solution to our integer linear program now looks as follows: Set the first variable to $z^*$. Let the columns corresponding to $x_i$ and $y_i$ be set as follows for each $i$: If $v_i = x_i$ then set this variable occurrence in the solution vector to 1. Set the occurrence to the corresponding variable of $y_i$ to zero. Otherwise, set the variables the other way round. Finally, the variable corresponding to the columns of the $q_i$ are computed as $(z^* - v_i)/q_i$. It is easy to see that this solution is feasible and satisfies the bounds on the variable sizes. \\  

$\Leftarrow$ Let the given instance of the \TwoStageProb{} problem be a yes-instance. By definition of the constraint matrix we have for every $1 \leq i \leq n$ that there exists a multiple $\lambda_i \geq 0$ such that $z = x_i + \lambda_i q_i$ or $z = y_i + \lambda_i q_i$. Hence $z \equiv x_i \mod q_i$ or $z \equiv y_i \mod q_i$ for every $1 \leq i \leq n$. Further, $z \leq u$. Thus, the solution $z$ is a solution of the \Rest{} problem. \\

\noindent \textit{Running time:} Mapping the variables and computing the values for the $q_i$s can all be done in polynomial time regarding the largest occurring number and $n$. 
\end{proof}

\section{Runtime Bounds for $2$-Stage Stochastic ILPs under ETH} \label{sec:ETH}
This sections presents the proof that the double exponential running time in the current state-of-the-art algorithms is nearly tight assuming the Exponential Time Hypothesis (ETH). To do so, we make use of the reductions above showing that we can transform an instance of the $3$-\textsc{SAT} problem to an instance of the \TwoStageProb{} problem.

\begin{corollary} \label{t:ETHParamN}
The \TwoStageProb{} problem cannot be solved in time less than $2^{\delta \sqrt{n}}$ for some $\delta > 0$ assuming ETH.
\end{corollary}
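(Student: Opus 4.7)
The plan is to chain together the three reductions already established (Theorems \ref{t:3SattoQCRed}/\ref{t:3SattoQCInstanceSize}, \ref{t:QCtoMRD}, \ref{t:MRDtoTwoStage}) and track how the size blows up from $3$-\textsc{SAT} to the final \TwoStage{} instance. Starting with a $3$-\textsc{SAT} formula with $\ell$ variables and $m$ clauses, Theorem \ref{t:3SattoQCInstanceSize} produces a \QC{} instance with $n_{\QCInd} \in O((\ell+m)^2)$ prime factors. Theorem \ref{t:QCtoMRD} then yields a \Rest{} instance with $n_{\MRD} \in O(n_{\QCInd}) \subseteq O((\ell+m)^2)$, and Theorem \ref{t:MRDtoTwoStage} gives a \TwoStage{} instance whose number of diagonal blocks satisfies $n \in O(n_{\MRD}) \subseteq O((\ell+m)^2)$. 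All three reductions run in polynomial time.

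Now I would argue by contradiction. Suppose there existed an algorithm solving \TwoStage{} in time $2^{o(\sqrt{n})}$. Composing the three polynomial-time reductions with this hypothetical algorithm yields an algorithm for $3$-\textsc{SAT} of running time
\begin{equation*}
\mathrm{poly}(\ell+m) + 2^{o(\sqrt{n})} \;=\; \mathrm{poly}(\ell+m) + 2^{o(\sqrt{(\ell+m)^2})} \;=\; 2^{o(\ell+m)},
\end{equation*}
since $\sqrt{n} = O(\ell+m)$ by the size bound above. This contradicts the ETH (together with the Sparsification Lemma), which rules out any $2^{o(\ell+m)}$ algorithm for $3$-\textsc{SAT}. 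Consequently no such algorithm for \TwoStage{} can exist, so any algorithm needs time at least $2^{\delta\sqrt{n}}$ for some constant $\delta>0$.

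There is no real obstacle here: the proof is a bookkeeping argument that combines the three reductions and uses that $\sqrt{n}$ scales linearly in the $3$-\textsc{SAT} input size. The only point one has to verify is that each reduction is polynomial time (so the reduction cost itself is absorbed by $\mathrm{poly}(\ell+m)$) and that the quadratic blow-up $n = O((\ell+m)^2)$ from Theorem \ref{t:3SattoQCInstanceSize} is exactly compensated by the square root in the claimed lower bound $2^{\delta\sqrt n}$ — which is why this particular exponent appears in the statement.
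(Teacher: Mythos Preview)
Your proposal is correct and follows essentially the same route as the paper: chain the three polynomial-time reductions (Theorems~\ref{t:3SattoQCInstanceSize}, \ref{t:QCtoMRD}, \ref{t:MRDtoTwoStage}), observe that the number of blocks satisfies $n = O((\ell+m)^2)$, and conclude that a $2^{o(\sqrt n)}$ algorithm for \TwoStage{} would yield a $2^{o(\ell+m)}$ algorithm for $3$-\textsc{SAT}, contradicting ETH together with the Sparsification Lemma. The only cosmetic difference is that the paper first invokes Sparsification to assume $m\in O(\ell)$ and then carries only $\ell$ through the bounds, whereas you keep $\ell+m$ throughout; both formulations are equivalent.
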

\begin{proof}
Suppose the opposite. That is, there is an algorithm solving the \TwoStageProb{} problem in time less than $2^{\delta \sqrt{n}}$.
Let an instance of the $3$-\textsc{SAT} problem with $n_\SatInd$ variables and $m_\SatInd$ clauses be given. Due to the Sparsification lemma, we may assume that $m_\SatInd \in O(n_\SatInd)$ \cite{DBLP:journals/jcss/ImpagliazzoPZ01}. The Sparsification lemma states that any $3$-SAT formula can be replaced by subexponentially many $3$-SAT formulas, each with a linear number of clauses with respect to the number of variables. The original formula is satisfiable if at least one of the new formulas is. This yields that if we cannot decide a $3$-SAT problem in subexponential time, we can also not do so for a $3$-SAT problem where $m_\SatInd \in O(n_\SatInd)$.

We can reduce such an instance to an instance of the \QC{} problem in polynomial time regarding $n_\SatInd$ such that $n_{\QCInd} \in O(n_\SatInd^2)$, $max_{i}\{b_i\} \in O(n_\SatInd^2\log(n_\SatInd))$, $\alpha, \beta, \gamma = 2^{O(n_\SatInd^2 \log(n_\SatInd))}$, see Theorems~\ref{t:3SattoQCRed} and \ref{t:3SattoQCInstanceSize}.

Next, we reduce this instance to an instance of the \Rest{} problem. Using Theorem \ref{t:QCtoMRD}, this yields the parameter sizes $n_{\MRD} \in O(n_\SatInd^2)$, $\max_{i \in \{1, \dots, n_{\MRD}\}}\{q_i, x_i, y_i\}=O(n_\SatInd^2\log(n_\SatInd))$, and finally $\zeta \in$ $ 2^{O(n_\SatInd^2 \log(n_\SatInd))}$. Note that all prime numbers greater than $2$ appear at most once in the prime factorization of $\beta$ and $2$ appears $4$ times. Thus, the largest $q_i$, which corresponds to $\max_i\{b_i^{\beta_i}\}$ equals the largest prime number in the \QC{} problem: The largest prime number is at least the $(\nu^2+2\nu+2m'+13) \geq 13$th prime number by a rough estimation. The $13$th prime number is $41$ and thus larger than $2^4 = 16$. 

Finally, we reduce that instance to an instance of the \TwoStageProb{} problem with parameters $r,s,t, ||c||_{\infty}$, $ ||b||_{\infty}, ||\ell||_{\infty} \in O(1)$, $||u||_{\infty} \in  2^{O(n_\SatInd^2 \log(n_\SatInd))}$, $n \in O(n_\SatInd^2)$, and $\Delta \in O(n_\SatInd^2 $ $\log(n_\SatInd))$, see Theorem \ref{t:MRDtoTwoStage}.

Hence, if there is an algorithm solving the \TwoStageProb{} problem in time less than $2^{\delta \sqrt{n}}$ this would result in the $3$-\textsc{SAT} problem to be solved in time less than $2^{\delta \sqrt{n}} = 2^{\delta \sqrt{C_1 n_\SatInd^2}} = 2^{\delta(C_2 n_\SatInd))}$ for some constants $C_1$, $C_2$. Setting $\delta_\SatInd \leq \delta/C_2$, this would violate the ETH.
\end{proof}

To prove our main result, we still have to reduce the size of the coefficients in the constraint matrix. To do so, we encode large coefficients into submatrices. This reduces the size of the entries greatly while just extending the matrix dimensions slightly. A similar approach was used for example in~\cite{DBLP:conf/ipco/Klein20} to prove a lower bound for the size of inclusion minimal kern-elements of \TwoStage{} ILPs or in \cite{DBLP:conf/stacs/KnopPW19} to decrease the value of $\Delta$ in the matrices.

\begin{theorem}
The \TwoStageProb{} problem cannot be solved in time less than $2^{2^{\delta (s+ t)}} |I|^{O(1)}$ for some constant $\delta > 0$, even if $r=1$, $\Delta, ||b||_{\infty}, ||c||_{\infty}, ||b||_{\infty} \in O(1)$, assuming ETH. Here $|I|$ denotes the encoding length of the total input. 
\end{theorem}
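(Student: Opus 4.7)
The starting point is the construction behind Corollary~\ref{t:ETHParamN}, which already produces a \TwoStage{} instance with $s=1$, $r=2$, $t=3$, $||b||_\infty=O(1)$, and $n\in O(\ell^2)$, but with the largest matrix entry $\Delta\in O(\ell^2\log\ell)$ growing with the underlying $3$-\textsc{SAT} formula. Since $r+t$ is constant there, the quantity $2^{2^{\delta(r+t)}}$ is itself constant and the raw corollary is too weak. The plan is to trade large coefficients for a logarithmic growth in $r+t$: push $\Delta$ down to $O(1)$ at the cost of $r,t\in\Theta(\log\ell)$, so that the claimed runtime $2^{2^{\delta(r+t)}}$ becomes of the form $2^{\ell^{\Theta(\delta)}}$, which can then be calibrated to sit below the ETH lower bound $2^{\Omega(\ell)}$ for $3$-\textsc{SAT}.

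The reduction tool is a standard binary-encoding gadget, but inserted \emph{locally} inside each block so that the $2$-stage stochastic shape of $\mathcal A$ is preserved. In each block $i$ and for each local variable $w_j^i$ ($j=1,2,3$) currently multiplied by a large coefficient, introduce $k=\lceil\log_2(\Delta+1)\rceil\in O(\log\ell)$ auxiliary local variables $v_j^{i,0},\dots,v_j^{i,k}$, identifying $v_j^{i,0}$ with $w_j^i$ and adding doubling constraints $v_j^{i,p+1}-2\,v_j^{i,p}=0$. Then $v_j^{i,p}=2^p w_j^i$, and any product $c\cdot w_j^i$ with binary expansion $c=\sum_p c_p 2^p$ rewrites as $\sum_p c_p v_j^{i,p}$ with $c_p\in\{0,1\}$. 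Applied to the first row of each original block, the equation $-z+q_i w_1^i+x_i w_2^i+y_i w_3^i=0$ becomes a row whose coefficients are all in $\{-1,0,1\}$, while $w_2^i+w_3^i=1$ stays trivial after identifying $w_2^i,w_3^i$ with $v_2^{i,0},v_3^{i,0}$. Since all new rows and variables are confined to a single block, the global column for $z$ is untouched and the block-diagonal structure remains.

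A short parameter check: $s$ stays at $1$; the per-block variable count is $t=3(k+1)\in O(\log\ell)$; the per-block row count is $r=2+3k\in O(\log\ell)$; the largest absolute entry in $\mathcal A$ is $2$ (coming from the doubling constraints); and the right-hand side entries remain in $\{0,1\}$, so $\Delta,||b||_\infty\in O(1)$. Upper bounds on the new variables blow up only by a factor of $2^k=\ell^{O(1)}$, so the overall encoding length $|I|$ stays polynomial in $\ell$; feasibility equivalence with the pre-encoding instance is routine from the identity $v_j^{i,p}=2^p w_j^i$.

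Composing Theorems~\ref{t:3SattoQCRed}, \ref{t:QCtoMRD}, \ref{t:MRDtoTwoStage} with this encoding, any hypothetical algorithm solving the \TwoStage{} problem in time $2^{2^{\delta(r+t)}}|I|^{O(1)}$ on these instances translates into a $2^{\ell^{\delta C}}\mathrm{poly}(\ell)$-time algorithm for $3$-\textsc{SAT}, where $C$ is the constant with $r+t\leq C\log_2\ell$. Choosing $\delta<1/C$ forces $\ell^{\delta C}=o(\ell)$, contradicting ETH (invoked together with the Sparsification Lemma so that $m\in O(\ell)$ throughout the chain) and proving the theorem for this $\delta$. The main obstacle is keeping the bit-encoding gadget strictly inside a single block: putting any doubling row into the global portion would destroy the $2$-stage stochastic shape, but the gadget above avoids this because the scaling relations $v_j^{i,p}=2^p w_j^i$ only couple local variables of the same block.
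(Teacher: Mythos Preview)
Your proposal is correct and follows essentially the same route as the paper: both start from the instance of Corollary~\ref{t:ETHParamN}, insert a binary doubling gadget inside each block (the paper's matrix $E$ with rows $2v_p-v_{p+1}=0$ is exactly your constraint $v_{p+1}-2v_p=0$), obtain $r,t\in O(\log\ell)$ with $\Delta=O(1)$ and $s=1$ unchanged, and then derive the ETH contradiction via $2^{2^{\delta(r+t)}}=2^{\ell^{\delta C}}$ for $\delta<1/C$. Your explicit remark that the gadget must stay block-local to preserve the $2$-stage shape, and your bound $|I|\in\ell^{O(1)}$, are in fact slightly cleaner than the paper's corresponding passage.
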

\begin{proof}
First, we show that we can alter the resulting integer linear program such that we reduce the size of $\Delta$ to $O(1)$. We do so by encoding large coefficients with base $2$, which comes at the cost of enlarged dimensions of the constraint matrix. Let $\enc(x)$ be the encoding of a number $x$ with base $2$. Further, let $\enc_i(x)$ be the $i$th number of $\enc(x)$. Finally, $\enc_0(x)$ denotes the last significant number of the encoding. Hence, the encoding of a number $x$ is $\enc(x) = \enc_0(x) \enc_1(x) \dots \enc_{\lfloor\log(\Delta)\rfloor}(x)$ and $x$ can be reconstructed by $x = \sum\nolimits^{\lfloor\log(\Delta)\rfloor}_{i = 0} \enc_i(x) \cdot 2^i$.

Let a matrix $E$ be defined as,
\begin{equation*}
E = 
\begin{pmatrix}
2   & -1         & 0    &  \dots & 0  \\
0               & 2 & -1   & 0 \dots  &   0 \\
\vdots & \ddots & \ddots  & \ddots\\
0 & \dots & 0 & 2 & -1 

\end{pmatrix}.
\end{equation*}

We re-write the constraint matrix as follows: For each coefficient $a > 1$, we insert its encoding $\enc(a)$ and beneath we put the matrix $E$. Furthermore, we have to fix the dimensions for the first row in the constraint matrix, the columns without great coefficients and the right-hand side $b$ by filling the matrix at the corresponding positions with zeros. The altered integer linear program $\mathcal{A}\cdot x = b$ is displayed in Figure~\ref{fig:AlteredMatrix}.
\begin{figure}[h!]
\begin{equation*}
\begin{pmatrix}
-1       & \enc(q_1) & \enc(x_1)  & \enc(y_1)  &0&\dots&0& 0     & \dots & 0 \\

0 &  E & 0\dots0& 0\dots0&0&\dots&0& 0     & \dots & 0\\

\vdots & 0\dots0& E & 0\dots0 &0&\dots&0& 0     & \dots & 0 \\

0  &0\dots0 & 0\dots0& E &0&\dots&0& 0     & \dots & 0\\

0       & 0\dots0     & 1\,0 \dots 0     & 1\,0 \dots 0     &0&\dots&0& 0     & \dots & 0 \\

\vdots  &\vdots &\ddots &\ddots &\ddots&\ddots&\ddots&\ddots &\ddots &\ddots \\

-1       & 0     & \dots & 0     &0&\dots&0& \enc(q_n)  & \enc(x_n)  & \enc(y_n) \\

0 &0&\dots&0& 0     & \dots & 0 & E &0\dots0 &0\dots0\\

\vdots  &0&\dots&0& 0     & \dots & 0 &0\dots0 & E &0\dots0 \\

0  &0&\dots&0& 0     & \dots & 0 &0\dots0 &0\dots0 & E\\

0       & 0     & \dots & 0     &0&\dots&0& 0\,\dots0     & 1\,0 \dots 0       & 1\,0 \dots 0
\end{pmatrix} \cdot x =
\begin{pmatrix}
0 \\
\vdots \\
 \\
0 \\
1 \\
\vdots\\
0 \\
\vdots \\
 \\
0 \\
1 \\
\end{pmatrix}
\end{equation*}.
\caption{The displayed ILP is the altered ILP after encoding large entries with basis $2$.} \label{fig:AlteredMatrix}
\end{figure}
Note that the ones beneath the sub-matrices $\enc(x_i)$ and $\enc(y_i)$ correspond to $\enc_0(x_i)$ and $\enc_0(y_i)$.
The independent blocks consisting of $\enc(a)$ and the matrix $E$ beneath correctly encodes the number $a>1$, \ie, it preserves the solution space: Let $x_a$ be the number in the solution corresponding to the column with entry $a$ of the original instance. The solution for the altered column (\ie, the sub-matrix) is $(x_a \cdot 2^0, x_a \cdot 2^1, \dots, x_a \cdot 2^{\lfloor\log(\Delta)\rfloor})$. The additional factor of $2$ for each subsequent entry is due to the diagonal of $E$. It is easy to see that $a \cdot x_a = \sum\nolimits_{i = 0}^{\lfloor\log(\Delta)\rfloor} \enc_i(a) \cdot x_a \cdot 2^i$ as we can extract $x_a$ on the right-hand side and solely the encoding of $a$ remains. Thus, the solutions of the original matrix and the altered one directly transfer to each other. Hence, the solution space is preserved.

Regarding the dimensions, each coefficient $a > 1$ is replaced by a $(O(\log(\Delta)) \times O(\log(\Delta)))$ matrix. Thus, the dimension expands to $t' = t \cdot O(\log(\Delta)) = O(\log(\Delta))$, $s' = s \cdot O(\log(\Delta)) = O(\log(\Delta))$, while $r$ and $n$ stay the same. Further, we have to adjust the bounds. The lower bound for all new variables is also zero. For the upper bounds we allow an additional factor of $2^i$ for the $i$th value of the encoding. Thus, $||u'||_{\infty} = 2^{{\lfloor\log(\Delta)\rfloor}} ||u||_{\infty}$. Further, we get that the largest coefficient is bounded by $\Delta' = O(1)$. The right-hand side $b$ enlarges to a vector $b'$ with $O(n\log(\Delta))$ entries.

Now suppose there is an algorithm solving the \TwoStageProb{} problem in time less than $2^{2^{\delta (s+t)}} |I|^{O(1)}$.
The proof of Theorem \ref{t:ETHParamN} shows that we can transform an instance of the $3$-\textsc{SAT} problem with $n_\SatInd$ variables and $m_\SatInd$ clauses to an \TwoStage{} ILP with parameters $r,s,t, ||c||_{\infty}, ||b||_{\infty}, ||\ell||_{\infty} \in O(1)$, $||u||_{\infty} \in 2^{O(n_\SatInd^2 \log(n_\SatInd))}$, $n \in O(n_\SatInd^2)$, and $\Delta \in O(n_\SatInd^2\log(n_\SatInd))$. Further, we explained above that we can transform this ILP to an equivalent one where 
\begin{align*}
& t' = O(\log(\Delta)) = O(\log(n_\SatInd^2 \log(n_\SatInd))) = O(\log(n_\SatInd)), \\
& s' = O(\log(\Delta)) = O(\log(n_\SatInd^2 \log(n_\SatInd))) = O(\log(n_\SatInd)), \\
& \Delta' = O(1),\\
& b' \in \mathbb{Z}^{O(n_\SatInd^2\log(n_\SatInd))},\\
&||u'||_{\infty} = 2^{{\lfloor\log(\Delta)\rfloor}} ||u||_{\infty} =  2^{{\lfloor\log(n_\SatInd^2\log(n_\SatInd))\rfloor}}  2^{O(n_\SatInd^2 \log(n_\SatInd))} =   2^{O(n_\SatInd^2 \log(n_\SatInd))},
\end{align*}
while $r$, and $n$ stay the same.
The encoding length $|I|$ is then given by 
\begin{align*}
|I| = (nt'(r+ns'))\log(\Delta')+(r+ns')\log(||\ell||_{\infty})+ \\
(r+ns')\log(||u'||_{\infty}) +nt'\log(||b'||_{\infty}) +(r+ns')\log(||c||_{\infty}) \\
= 2^{O(n_\SatInd^2)}.
\end{align*}

 Hence, if there is an algorithm solving the \TwoStageProb{} problem in time less than $2^{2^{\delta (s+t)}} |I|^{O(1)}$ this would result in the $3$-\textsc{SAT} problem to be solved in time less than 
\begin{align*}
2^{2^{\delta (s+t)}} |I|^{O(1)}= 2^{2^{\delta ( C_1\log(n_\SatInd) + C_2\log(n_\SatInd)) }}  2^{n_\SatInd^{O(1)}} = 2^{2^{\delta C_3\log(n_\SatInd) }} 2^{n_\SatInd^{O(1)}} \\
= 2^{n_\SatInd^{\delta \cdot C_3}} 2^{n_\SatInd^{O(1)}} = 2^{n_\SatInd^{\delta \cdot C_4}}
\end{align*}
 for some constants $C_1, C_2, C_3, C_4$. Setting $\delta = \delta'/C_4$ we get $2^{n_\SatInd^{\delta C_4}} = 2^{n_\SatInd^{\delta'}}$. As it holds for sufficient large $x$ and $\epsilon < 1$ that $x^\epsilon < \epsilon x$ it follows that $2^{n_\SatInd^{\delta'}} < 2^{\delta'n_\SatInd}$. This violates the ETH. Note that this result even holds if $r=1$, $\Delta, ||c||_{\infty}, ||b||_{\infty},||\ell||_{\infty} \in O(1)$ as constructed by our reductions. 
\end{proof}

\bibliography{ref}
\end{document}